\documentclass[twocolumn]{IEEEtran}
%,draftcls,12pt
%,draftclsnofoot,12pt
% *** PACKAGES ***

% correct bad hyphenation here
\hyphenation{op-tical net-works semi-conduc-tor}
\usepackage{amssymb}
\usepackage{amsmath}
\usepackage{booktabs}
\usepackage{amsfonts}
\usepackage{amsthm}
\usepackage{xfrac}
\usepackage{color}
\usepackage{array}
\usepackage{graphicx}
\usepackage[font=footnotesize]{caption}
\usepackage[font=footnotesize]{subcaption}
\usepackage{cite}
\usepackage{wasysym}
\usepackage{dsfont}
\usepackage{cases}
\usepackage{siunitx}
\usepackage[makeroom]{cancel}
 % changes by albrecht
 % changes by Nick
 % changes by meik
\newtheorem{theorem}{Theorem}

\newtheorem{lemma}[theorem]{Lemma}
\DeclareMathOperator*{\ld}{ld}
%%%%%%%%%%%%%%%%%%%%%%%%%%%%%%%%%% \abovecaptionskip %%%%%%%%%%%%%%%%%%%%%
%\setlength{\abovecaptionskip}{0.05ex}
%\setlength{\belowcaptionskip}{0.1ex}
%\setlength{\floatsep}{1ex}
%\setlength{\textfloatsep}{1ex}

\begin{document}

\title{How Reliable and Capable is Multi-Connectivity?}
\author{Albrecht Wolf, Philipp Schulz, Meik D{\"o}rpinghaus,~\IEEEmembership{Member,~IEEE},\\Jos\'{e} C\^{a}ndido Silveira Santos Filho,~\IEEEmembership{Member,~IEEE}, and Gerhard Fettweis,~\IEEEmembership{Fellow,~IEEE}
	\thanks{This work has been submitted to the IEEE for possible publication. Copyright may be transferred without notice, after which this version may no longer be accessible. The material in this work has been presented in part at the IEEE GLOBECOM, Singapore, Dec. 2017 \cite{Wolf2017_G1}.}
	\thanks{This work was supported by the Federal Ministry of Education and Research within the programme ``Twenty20 - Partnership for Innovation" under contract 03ZZ0505B - ``fast wireless".}	
	\thanks{A.~Wolf, P.~Schulz, M.~D{\"o}rpinghaus, and G.~Fettweis are with the Vodafone Chair Mobile Communications Systems, Technische Universit\"{a}t Dresden, Dresden, Germany, E-mails: \{albrecht.wolf, philipp.schulz2, meik.doerpinghaus, gerhard.fettweis\}@tu-dresden.de.}
\thanks{J.~C.~S.~Santos Filho is with the Department of Communications, School of Electrical and Computer Engineering, University of Campinas (UNICAMP), Campinas--SP, Brazil, E-mails: candido@decom.fee.unicamp.br.}}

% make the title area
\maketitle
%
%\Cdo{Comments David}, \Caw{comments Albrecht}
%
% As a general rule, do not put math, special symbols or citations
% in the abstract or keywords.
\begin{abstract}
Multi-connectivity (MCo) is considered to be a key strategy for enabling reliable transmissions and enhanced data rates in fifth-generation mobile networks, as it provides multiple links from source to destination. In this work, we quantify the communication performance of MCo in terms of outage probability and throughput. For doing so, we establish a simple, yet accurate analytical framework at high signal-to-noise ratio (SNR), in which the number of links, the spectral efficiency, the path loss, and the SNR are incorporated, giving new insights into the potentials of MCo as compared with single-connectivity (SCo). These are our main contributions: (1) finding the exact coding gain of the outage probability for parallel block-fading channels; (2) quantifying the performance improvement of MCo over SCo in terms of SNR gain; and (3) comparing optimal and suboptimal combining algorithms for MCo at the receiver side, namely joint decoding, selection combining, and maximal-ratio combining, also in terms of SNR gain. Additionally, we apply our analytical framework to real field channel measurements and thereby illustrate the potential of MCo  to achieve high reliability and high data rates in real cellular networks.
\end{abstract}
%
% Note that keywords are not normally used for peerreview papers.
\begin{IEEEkeywords}
Joint decoding, multi-connectivity, outage probability, parallel fading channels, ultra-reliable low latency communications.
\end{IEEEkeywords}
\section{Introduction}
\label{Introduction}
Fifth-generation mobile networks (5G) will face several challenges to cope with emerging application scenarios~\cite{Schulz2017} in the context of ultra-reliable low latency communications (URLLC) such as mission critical industrial automation or communications for vehicular coordination, which require an extremely high reliability (e.g., frame error rates of $10^{-9}$ or $10^{-5}$, respectively) while simultaneously providing low latency (e.g., end-to-end delay of 1~ms). These requirements pose a massive challenge on the physical layer. In fourth-generation mobile networks (4G), reliability is obtained by the hybrid automatic repeat request procedure, which retransmits erroneously received packets. However, the tight timing constraint of URLLC does not endorse multiple retransmissions. \\ \indent  
Multi-connectivity (MCo) is a promising tool for boosting the reliability and capacity of wireless networks~\cite{Oehemann2017}. Firstly, it provides a flexible communication framework that can trade diversity for multiplexing via multiple routes to the destination. Secondly, MCo architectures can use different carrier frequencies, such that multiple copies of the same information can, in the best case, be delivered within a single time slot. However intuitive such a diversity-multiplexing tradeoff may be, some fundamental questions remain open regarding the potentials of MCo as compared with single-connectivity (SCo):
\begin{enumerate}
\item Given a target (fixed) spectral efficiency, how much transmit power can be saved while achieving a same outage probability at high SNR?
	\item Given a target (fixed) outage probability, how much transmit power can be saved while achieving a same throughput at high SNR?
	\item How those savings vary with the level of the target metric and with the number of connections and topology?
\end{enumerate}
In this work, we answer the above questions by considering both optimal and suboptimal combining algorithms at the receiver side. The former is provided by joint decoding (JD). For the latter, we consider standard diversity-combining methods, namely maximal-ratio combining (MRC) or selection combining (SC). To answer the referred questions, we derive exact integral-form expressions for the outage probability and throughput of each investigated system setup. More importantly, we obtain corresponding asymptotic, closed-form expressions at high signal-to-noise-ratio (SNR) that shed light on the problem, thereby providing the answers we look for.

Next, we briefly describe MCo, the combining algorithms, and then outline our approach and our contributions.
\subsection{Multi-Connectivity}
The MCo concept refers to any system architecture where users are simultaneously connected via multiple communication links. With respect to URLLC, it is most desirable to transmit the same data redundantly (diversity) over independent fading channels in a single time slot. Microdiversity~\cite{Molisch2012}, including spatial and frequency diversity, is well suited to combat small-scale fading, while satisfying the tight latency constraints. However, microdiversity might not be suitable for combating large-scale fading, which is created by shadowing effects. Shadowing is almost independent of the frequency band, so that frequency diversity proves then ineffective. Spatial diversity can be used, but the correlation distances for large-scale fading can be greater than ten or even hundred meters. Thus, macrodiversity~\cite{Molisch2012}, where large distances between antennas exist, proves more appropriate to combat large-scale fading. 
	
Established principles to obtain independent fading channels include classical multiple-input multiple-output (MIMO) systems\cite{oestges2010} (spatial microdiversity) with space-time block coding, and distributed MIMO systems~\cite{oestges2010} (spatial macrodiversity), suitable for single-frequency networks \cite{Eriksson2001}. Alternatively, 4G concepts such as carrier aggregation (CA)~\cite{LTE_A} and dual connectivity (DC)~\cite{michalopoulos2016} have been introduced to make use of multiple so-called component carriers (frequency microdiversity), provided adjacent channels and sufficient RF bandwidth transceivers can realize frequency diversity with a single antenna. According to \cite{goldsmith2005}, the small-scale fading of two signals is approximately uncorrelated if their frequencies are separated at least by the coherence bandwidth, which is confirmed, for instance, by measurement results in \cite{van2012}. The techniques of CA and DC also support non-collocated deployments (frequency macrodiversity).
\subsection{Combining Algorithms}
\label{CombiningAlgorithms}
The system reliability strongly depends on the combining algorithm used at the receiver side, regardless of the diversity method. Combining algorithms merge the information received from multiple inputs (diversity branches) into a single unified output. The goal is to make use of the redundant information received from the multiple inputs. There are various combining algorithms known in literature \cite{tse2005}, many of which merge the received inputs at the symbol level, e.g., 
\begin{enumerate}
	\item \emph{Selection Combining}, where the best input in terms of received SNR is selected, while all other inputs are discarded, and
	\item \emph{Maximal-Ratio Combining}, where all received inputs are weighted by their respective SNRs and are coherently added.
\end{enumerate}	
In contrast to SC and MRC, which combine inputs already at the symbol level, the principle of
\begin{enumerate}
  	\setcounter{enumi}{2}
  	\item \emph{Joint Decoding}  is to combine the inputs at the decoder level, e.g., by iteratively exchanging information between decoders of each branch.
\end{enumerate}
 In addition to the different combining levels, JD differs fundamentally from SC and MRC in that the encoders at the transmitter may produce different channel codewords based on a joint codebook, while SC and MRC combine received inputs from the same channel codeword.
\subsection{Related Work}
\label{RelatedWork}
Recently, research on URLLC is emerging considerably, focusing on the analysis of micro- and macrodiversity and its impact on reliability. In~\cite{pocovi2015}, MCo solutions that utilize micro- as well as macrodiversity  were evaluated in system simulations to illustrate how the signal-to-interference-plus-noise ratio and the outage probability can be improved. In~\cite{Kirsten2015}, the impairments of correlated fading were evaluated and the trade-offs between power consumption, link usage, and outage probability were given. In another work, multi-radio access-technology architectures were compared regarding their latency, which is significantly improved by MCo techniques \cite{nielsen2016}. For other works on MCo for URLLC, see \cite{Oehemann2017} and the references therein. The major underlying concepts of MCo solutions, namely micro- and macrodiversity, have been extensively studied, and their effects on the outage probability are well understood~\cite{Molisch2012}. However, in the aforementioned studies, only linear (suboptimal) combining schemes, namely SC and MRC, have been considered. In particular, the JD scheme, which is optimum, remains open for investigation. Herein we help to fill this gap.

Deriving the outage probability of JD for parallel fading channels has been recognized as a highly challenging problem. An important result is to evaluate the diversity-multiplexing tradeoff (DMT) of fading channels. The DMT states that by doubling the SNR, we get both a decrease in outage probability by the factor of $2^{-d(r)}$, yielding an increase in reliability, and an increase in throughput of $r$ bits per channel use, i.e., the DMT describes the slope and the pre-log factor of the outage probability and throughput, respectively, at infinite SNR. This concept was first proposed by Zheng and Tse for MIMO channels~\cite{Zheng2003}. The corresponding results for parallel fading channels can be found in \cite{tse2005}. For finite-SNR the DMT of MIMO channels was proposed in \cite{Narasimhan2006} under correlated fading. However, the DMT analysis does not fully characterize the outage probability, and thus is not suitable to address the fundamental questions formulated at the beginning of the Introduction. In \cite{Bai2013},  a tight upper and lower bound on the outage probability based on the outage exponent analysis is given but it involves heavy computational efforts as the results include the incomplete Gamma function and Meijer's $G$-function. In addition, neither the DMT analysis in \cite{Zheng2003,tse2005} nor the outage exponent analysis in~\cite{Bai2013} considers macrodiversity.
\subsection{Asymptotic Outage Analysis}	
We want to find a good performance indicator to evaluate the reliability of MCo in light of URLLC applications. In fact, we aim to ultimately derive simple and insightful closed-form expressions that can be easily used to assess or optimize practical MCo deployments. To this end, an asymptotic analysis turns out to be a strong candidate, as it offers a simple yet in-depth characterization of the system performance's general trend. In the literature, the asymptotic outage probability is given depending on the so-called \emph{coding gain} $G_\text{C}$ and \emph{diversity gain}~$d$~(see, e.g., \cite{wang2003}), as
\begin{align*}
	\tilde{P}^{\text{out}}=\left(G_\text{C} \cdot \bar{\Gamma}\right)^{-d},
\end{align*}
where $\bar{\Gamma}$ is the average received SNR. In this work, we assume that the gain of MCo over SCo is based on the transmission of identical information over $N$ parallel block-fading channels. This setup can be exploited by all the combining algorithms described beforehand. All three combining algorithms can achieve the maximum diversity gain~\cite{tse2005}, i.e., $d=N$, whereas SCo has no diversity gain, i.e., $d=1$. But the combining algorithms differ with respect to the coding gain. The coding gains of SC and MRC have been studied in various contexts~\cite{duman2008}. However, the exact coding gain of JD has been unknown so far, since the derivation of the exact outage probability in closed form is very difficult. Only few bounds are known, e.g., a lower bound is given in \cite[Ch.~9.1.3]{tse2005} based on rate allocation to the individual fading channels, and lower and upper bounds based on an outage exponent analysis~\cite{Bai2013}. Both works reveal some drawbacks. The lower bound in~\cite[Ch.~9.1.3]{tse2005} offers a simple closed-form solution but is not tight, whereas the  lower and upper bounds in \cite{Bai2013} are tight but involve heavy computational efforts as the results include the incomplete Gamma function and Meijer's $G$-function. The exact solution of the coding gain of JD remains unknown and likewise the asymptotic outage probability.

More recently, we have evaluated the packet error rate of MCo by real link-level Monte-Carlo simulations in Wireless~LAN~\cite{Schwarzenberg2018}, from which we concluded that i) the asymptotic packet error rate is a good metric to evaluate the reliability of MCo for URLLC, ii) the asymptotic outage probability serves as a good benchmark to evaluate practical implementations, and iii) the asymptotic outage probability is suitable for link-level abstraction.
\subsection{Main Contributions of this Work}
\label{Contributions}
A convenient way to quantify the performance gain of MCo over SCo is to evaluate the required transmit power to achieve a given outage probability and a given spectral efficiency. Eventually, we are interested in the transmit power reduction of MCo with respect to SCo, which we refer to as \emph{SNR gain}. Based on the SNR gain we can answer the fundamental questions formulated at the beginning of the Introduction. To this end, we derive a remarkably simple analytical description of the asymptotic outage probability $\tilde{P}_{\text{JD},N}^{\text{out}}$ for JD depending on the number of links $N$, the spectral efficiency $R_\text{c}$, the average transmit SNRs per link $P_i/N_0$, and the path losses $d_i^{-\eta}$, for $i \in \{1,2,...,N\}$, as
\begin{align*}
\tilde{P}_{\text{JD},N}^{\text{out}} & = \frac{A_N(R_\text{c})}{\prod_{i=1}^{N} (P_i/N_0)d_i^{-\eta}} 
\intertext{where}
A_N(R_\text{c}) &  = (-1)^N\left(1 - 2^{R_\text{c}} \cdot e_N  \left(-R_\text{c} \ln(2)\right) \right),
\end{align*}
 with $e_N(\cdot)$ being the exponential sum function. The exponent $N$ of the SNR is the diversity gain and the $N$th root of the inverse numerator, i.e., $G_{\text{C},\text{JD}}=1/\sqrt[N]{A_N(R_\text{c})}$, is the coding gain. To the best of our knowledge, the exact coding gain of JD for parallel block-fading channels has been unknown so far. Our approach concentrates on the asymptotic solution, which is an accurate performance indicator for the operational region of URLLC applications, as we demonstrate by numerical examples. Based on the asymptotic outage probability, we derive the SNR gain of MCo over SCo as
\begin{align*}
G_{\text{MCo},\text{SCo}}=\frac{A_1(R_\text{c})}{N\sqrt[N]{A_N(R_\text{c})} } \frac{1}{\sqrt[N]{\left(P^\text{out}\right)^{N-1}}} \frac{\sqrt[N]{\prod_{i=1}^{N} d_i^{-\eta}}}{d_1^{-\eta}}.  
\end{align*}
This result reveals that the SNR gain of MCo over SCo increases at a rate of around $3(N-1)/N$~dB with respect to the target spectral efficiency (i.e., per source sample/channel symbol) and decreases at a rate of $4.3(N-1)/N\cdot1/P^\text{out}$~dB with respect to the target outage probability. In addition, we quantify the performance improvement of JD over SC and MRC in terms of the SNR gain as 
\begin{align*}
G_{\text{JD},\text{SC}} & =  \frac{A_1(R_\text{c})}{\sqrt[N]{A_N(R_\text{c})}}, \quad \text{and} \\
G_{\text{JD},\text{MRC}} & = \frac{1}{\sqrt[N]{N!}} \cdot \frac{A_1(R_\text{c})}{\sqrt[N]{A_N(R_\text{c})}}, 
\end{align*} 
respectively. These results reveal that the SNR gain of JD over SC and MRC increases at a rate of around $3(N-1)/N$~dB with respect to the target spectral efficiency (i.e., per source sample/channel symbol), while being insensitive to the target outage probability.\\ \indent 
Finally, we apply our analysis to real field channel measurements and thereby illustrate the potential of MCo in actual cellular networks to achieve high reliability and throughput.
\subsection{Notation and Terminology}
The upper- and lowercase letters are used to denote random variables (RVs) and their realizations, respectively, unless stated otherwise. The alphabet set of a RV $X$ with realization $x$ is denoted by $\mathcal{X}$, and its cardinality, by $|\mathcal{X}|$. The probability mass function (pmf) and probability density function (pdf) of the discrete and continuous RV $X$ is denoted by $p_X(x)$ and $f_X(x)$, respectively. The pmf and pdf are simply denoted by $p(x)$ and $f(x)$, respectively, whenever this notation is unambiguous. Also, $\mathbf{X}^n$ and $\mathbf{x}^n$ represent vectors containing a temporal sequence of $X$ and $x$ with length $n$, respectively. We use $t$ to denote the time index and $i$ to denote a source index. We define $\mathbf{A}_\mathcal{S} = \{\mathbf{A}_i | i \in \mathcal{S}\}$ as an indexed series of random vectors, and $A_\mathcal{S} = \{A_i | i \in \mathcal{S}\}$ as an indexed series of RVs. In general, a set $\mathbf{A}$ contains elements $a_{(\cdot)}$, as in $\mathbf{A}=\{a_1,a_2,...,a_{|\mathbf{A}|}\}$. We define one particular set: $\mathcal{N}=\{1,2,...,N\}$. For two function $f(x)$ and $g(x)$, the notation $f(x)=\Theta(g(x))$, means that $k_1 g(x) \leq f(x) \leq k_2 g(x), \exists k_1>0,\exists k_2>0,\exists x_0,\forall x>x_0 $. We denote the probability of an event $\mathcal{E}$ as $\text{Pr}[\mathcal{E}]$, the mutual information as $I(\cdot;\cdot)$, the entropy as $H(\cdot)$, the convolution as $\ast$, the binary logarithm as $\ld(\cdot)$, and the natural logarithm as $\ln(\cdot)$.
\section{System Model}
\label{SystemModel}
\begin{figure*}[!ht]
	\centering
	\begin{subfigure}{0.45\linewidth}
		\includegraphics[width=\linewidth]{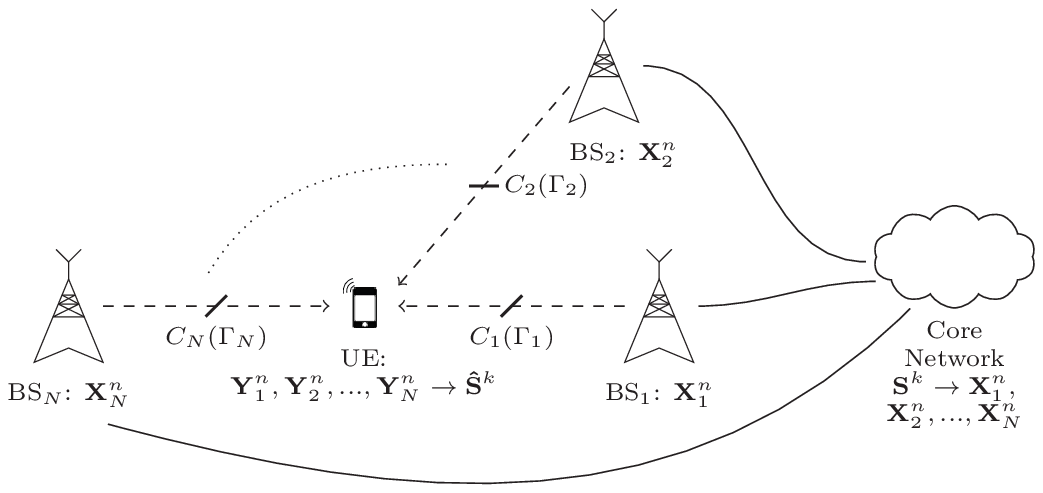}
		\caption{}
		\label{fig:SystemModel_DL}
	\end{subfigure}
	\begin{subfigure}{0.45\linewidth}
		\includegraphics[width=\linewidth]{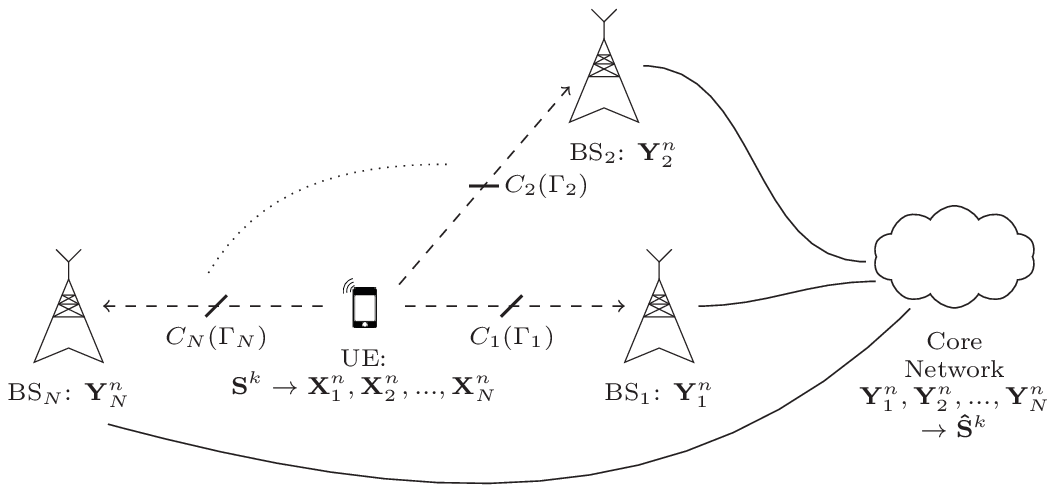}
		\caption{}
		\label{fig:SystemModel_UL}
	\end{subfigure}
	\caption{System model with a single UE, $N$ base stations, and a core network for (a) the downlink and (b) the uplink.}
	\label{fig:SystemModel_DLUL}
\end{figure*}
\subsection{Multi-Connectivity System Model}
\label{MCoSystemModel}
We consider a MCo cellular network consisting of a core network and $N$ base stations (BS$_i$, $\forall i \in \mathcal{N}$) communicating to a single user equipment (UE). The core network coordinates the data transmissions to the UE. To ease the notation, only one link per BS is included in the system model (macrodiversity). However, a single BS can also establish multiple connections to the UE (microdiversity). Connections between the core network and each BS are realized by backhaul links, and connections between each BS and UE, by wireless links. By assumption all $N$ wireless links are orthogonal, i.e., we consider parallel block-fading channels. The achievable transmission rate over the $i$th wireless link depends on its capacity $C_i(\Gamma_i)$ and, thus, on its received SNR $\Gamma_i$. In the system model, we distinguish between down- and uplink as follows: 
\paragraph{Downlink} The downlink system model, as illustrated in Fig.~\ref{fig:SystemModel_DL}, has one binary memoryless source, denoted as $[S(t)]_{t=1}^\infty$, with the $k$-sample sequence being represented in vector form as $\mathbf{S}^k = [S(1),S(2),...,S(k)]$. When appropriate, for simplicity, we shall drop the temporal index of the sequence, denoting the source merely as $S$. By assumption $S$ takes values in a binary set $\mathcal{B}=\{0,1\}$ with uniform probabilities, i.e., $\text{Pr}[S=0]=\text{Pr}[S=1]=0.5$. Therefore, the entropy of the sequence is $\sfrac{1}{k} \cdot H(\mathbf{S}^k)=H(S)=1$. At the core network, the source sequence $\mathbf{S}^k$ is encoded to the transmit sequences $\mathbf{X}_i^n, \forall i \in \mathcal{N}$. The $i$th transmit sequence is forwarded over the backhaul link to BS$_i$. At each BS the transmit sequence is then sent to the UE over parallel block-fading channels. The decoder at the UE retrieves the source sequence $\mathbf{S}^k$ from the received sequences $\mathbf{Y}_i^n, \forall i \in \mathcal{N}$.
\paragraph{Uplink} The uplink system model, as illustrated in Fig.~\ref{fig:SystemModel_UL}, is similar to the downlink, except that the source sequence is originated from the UE, and the received sequences $\mathbf{Y}_i^n, \forall i \in \mathcal{N}$, are decoded at the core network to retrieve the source sequence $\mathbf{S}^k$. Similar to the downlink, the $i$th transmit sequence $\mathbf{X}_i^n$ is sent from the UE to the $i$th BS  over parallel block-fading channels.\\ \indent
In this work, the down- and uplink system models can be treated interchangeably, so that all further results are applicable to both system models. 

\subsection{Link Model}
\label{LinkModel}
As discussed in the Introduction, micro- and macrodiversity can be achieved by spatial or frequency separation of the channels. In both cases, it is reasonable to assume that the channel fading is approximately uncorrelated, which we do in this study. Furthermore, to cope with the low latency constraint in URLLC, we consider relatively short encoded sequences. As a result, the length of an encoded sequence is less than or equal to the length of a fading block of a block Rayleigh fading. Moreover, the signals can be transmitted from or to different BSs, which leads to individual average SNR values.  \\ \indent
As argued, we can assume that the sequences $\mathbf{X}^n_i$, $\forall i \in \mathcal{N}$, are transmitted (in up- and downlink) over independent channels undergoing block Rayleigh fading and additive white Gaussian noise (AWGN) with mean power $N_0$. The pdf of the received SNR $\Gamma_i$ is given~by
\begin{gather}
\label{eq:RFpdf}
f_{\Gamma_i}(\gamma_i)=\frac{1}{\bar{\Gamma}_i} \exp\left(-\frac{\gamma_i}{\bar{\Gamma}_i}\right), \quad \text{for } \gamma_i \geq 0,
\end{gather}
with the average SNR $\bar{\Gamma}_i$ being obtained as $\bar{\Gamma}_i=(P_i/N_0) \cdot d_i ^{-\eta},$ where $P_i$ is the transmit power per channel, $d_i$ is the distance between BS$_i$ and the UE, and $\eta$ is the path loss exponent. The channel state information is assumed to be exclusively known at the receiver.
\section{Preliminaries and Problem Statement}
\label{Preliminaries}

The problem investigated in this work can be formulated as follows: the received channel codewords $\mathbf{Y}_1^n,...,\mathbf{Y}_N^n$ (cf. received sequences in Section~\ref{MCoSystemModel}) must comprise sufficient information such that the source sequence $\mathbf{S}^k$  can be perfectly reconstructed. At the core network, for the $i$th link, a channel code maps the source sequence $\mathbf{S}^k$ to a channel codeword $\mathbf{X}_i^n$ (cf. transmitted sequence in Section~\ref{MCoSystemModel}) with an spectral efficiency $R_{i,\text{c}}$, measured in source samples per channel input symbol, associated with  the modulation scheme $R_{i,\text{M}}=\ld(M)$, measured in bits per channel input symbol for a cardinality $M$ of the channel input symbol alphabet; and the channel code rate $R_{i,\text{cod}}$, measured in source samples per bit; i.e., $R_{i,\text{c}}=R_{i,\text{M}} \cdot R_{i,\text{cod}}$. In many parts of this work, for simplicity, we shall assume  $R_{i,\text{c}}= R_{\text{c}}=k/n, \forall i$. As mentioned before, the transmitter has no knowledge of the CSI and chooses a fixed spectral efficiency. If and only if the instantaneous capacity of the parallel fading channels support the spectral efficiency, the error probability at the receiver side can be made arbitrarily small~\cite{Biglieri1998}. 
\subsection{Channel Capacity}
The instantaneous channel capacity of the MCo system model depends on the combining algorithm. For SC and MRC the analysis is based on the gain attained by receiving the same channel codeword via multiple links, i.e., the channel codewords are identical: $\mathbf{X}_1^n=...=\mathbf{X}_N^n=\mathbf{X}^n$. On the other hand, for JD the channel codewords transmitted over all links may be different but based on a joint codebook. The resulting instantaneous channel capacities are given as follows: \\ \indent
\emph{Selection Combining:} The instantaneous received SNRs of each channel are assumed to be known by the receiver. The received sequence of the subchannel with the maximum SNR is selected as combiner output, and the other received sequences are discarded, so that~\cite{tse2005}
\begin{align}
	C_\text{SC}\left(\Gamma_1,...,\Gamma_N\right) & =  \max\left(I(\mathbf{X}^n;\mathbf{Y}_1^n),...,I(\mathbf{X}^n;\mathbf{Y}_N^n)\right) \nonumber \\
	\label{eq:capacity_SC}
	& =  \phi\left(\max\left(\Gamma_1,...,\Gamma_N\right)\right)
\end{align} 
is the complex AWGN channel capacity of SC. Here and throughout the text, $\phi(x)=\ld(1+x)$ represents the instantaneous complex AWGN channel capacity.\\ \indent
\emph{Maximal-ratio combining:} As for SC, the instantaneous received SNRs of each subchannel is assumed to be known. The received sequences are scaled to their corresponding SNRs and coherently added, so that~\cite{tse2005}
\begin{align}
	C_\text{MRC}\left(\Gamma_1,...,\Gamma_N\right)& =  I(\mathbf{X}^n;\mathbf{Y}_1^n,...,\mathbf{Y}_L^n)\nonumber \\
\label{eq:capacity_MRC}
	& =  \phi\left(\sum\nolimits_{i=1}^N\Gamma_i\right)
\end{align}
is the complex AWGN channel capacity of MRC. \\ \indent
\emph{Joint Decoding:} In contrast to SC and MRC, JD processes each received sequence individually until decoding. The achieved complex AWGN channel capacity is~\cite{Biglieri1998}
\begin{align}
C_\text{JD}\left(\Gamma_1,...,\Gamma_N\right) & =I(\mathbf{X}_1^n;\mathbf{Y}_1^n)+...+I(\mathbf{X}_N^n;\mathbf{Y}_N^n)\nonumber \\
\label{eq:capacity_JD}
& =\sum\nolimits_{i=1}^N\phi\left(\Gamma_i\right).
\end{align}
\subsection{Outage Formulation}
The outage probability is an important concept in fading channels, which provides a way to characterize the performance of communication systems in non-ergodic fading scenarios. As $N$ is finite, the parallel fading channel is non-ergodic, i.e., $N$ is not large enough to average over channel variations. If the achieved instantaneous channel capacity is less than the spectral efficiency, an outage event occurs~\cite{tse2005}. Thus, the outage probability is given by
\begin{align}
\label{eq:outage}
	P^\text{out}_{j,N}=\text{Pr} \left[C_j\left(\Gamma_1,...,\Gamma_N\right) < R_\text{c}\right]
\end{align}
for $j \in \{\text{SC},\text{MRC},\text{JD}\}$. The outage probability for SC and MRC are well studied, whereas it is very difficult to derive a tractable exact formula of the outage probability of JD.

\section{Outage Probabilities}
\label{Outage}
In this section, we establish the exact outage probability of JD in integral form. More importantly, we derive in closed form a corresponding asymptotic expression for the high-SNR regime. In addition, we reproduce known bounds on the JD outage probability given in \cite{tse2005,Bai2013}, as well as the exact and asymptotic outage probabilities of SC and MRC~\cite{duman2008,chen2005}, which we require later on for comparison.
\subsection{Joint Decoding}
The outage probability for JD can be calculated as follows:
{\allowdisplaybreaks\begin{align}
	\label{eq:JD_Outage_1}
	& P_{\text{JD},N}^{\text{out}} =\text{Pr}\left[0 \leq \phi(\Gamma_1) + \phi(\Gamma_2) + ... + \phi(\Gamma_N) < R_\text{c} \right] \\
				 & =\text{Pr}\left[0 \leq \phi(\Gamma_1) < R_\text{c} , 0 \leq \phi(\Gamma_2) < R_\text{c} - \phi(\Gamma_1),..., \right.\nonumber \\ 
	\label{eq:JD_Outage_2}
				 & \left. \qquad \,\, 0 \leq \phi(\Gamma_N) < R_\text{c}  - \phi(\Gamma_1) - \phi(\Gamma_2) - ... -  \phi(\Gamma_{N-1})\right] \\
				 & =\text{Pr}\left[0 \leq \Gamma_1 < 2^{R_\text{c} }-1, 0 \leq \Gamma_2<  2^{ R_\text{c}  - \phi(\Gamma_1)}-1,..., \right.\nonumber \\ 
 \label{eq:JD_Outage_3}
				 & \left. \qquad \,\, 0 \leq \Gamma_N < 2^{ R_\text{c}  - \phi(\Gamma_1) - \phi(\Gamma_2) - ... -  \phi(\Gamma_{N-1})}-1\right] \\
				 & = \int_{\gamma_1=0}^{2^{R_\text{c} }-1} \int_{\gamma_2=0}^{2^{ R_\text{c}  - \phi(\gamma_1)}-1} ... \int_{\gamma_N=0}^{2^{ R_\text{c}  - \phi(\gamma_1) - \phi(\gamma_2) - ... -  \phi(\gamma_{N-1})}-1}  \nonumber \\ 
	\label{eq:JD_Outage_4}
					& \quad f(\gamma_1) f(\gamma_2) ... f(\gamma_N) \text{d} \gamma_N ... \text{d} \gamma_2 \text{d} \gamma_1.
\end{align}}\noindent
The steps are justified as follows: in (\ref{eq:JD_Outage_1}) we substitute the JD channel capacity in~(\ref{eq:capacity_JD})  into the outage probability expression in~(\ref{eq:outage}); in (\ref{eq:JD_Outage_2}) the sum constraint is separated into individual constraints; in (\ref{eq:JD_Outage_3}) the bounds are transformed with $\phi^{-1}(y)=2^y-1$; in (\ref{eq:JD_Outage_4}) the probability of outage is established in integral form with the assumption that the received SNRs $\Gamma_i, \forall i \in \mathcal{N}$, are independent. The pdf $f(\gamma_i)$ is given in (\ref{eq:RFpdf}). Although the outage expression in (\ref{eq:JD_Outage_4}) cannot be solved in closed form, a simple asymptotic solution can be derived at high~SNR as
{\allowdisplaybreaks\begin{align}
		P_{\text{JD},N}^{\text{out}} 		 & \approx \int_{\gamma_1=0}^{2^{R_\text{c} }-1} \int_{\gamma_2=0}^{2^{ R_\text{c}  - \phi(\gamma_1)}-1} ... \noindent \nonumber \\
		& \quad \int_{\gamma_N=0}^{2^{ R_\text{c}  - \phi(\gamma_1) - \phi(\gamma_2) - ... -  \phi(\gamma_{N-1})}-1} \nonumber \\
\label{eq:JD_Outage_5}
		& \quad \frac{1}{\prod_{i=1}^{N}\bar{\Gamma}_i} \text{d} \gamma_N ... \text{d} \gamma_2 \text{d} \gamma_1 \\
	\label{eq:JD_Outage_6}
				 &= \frac{A_N(R_\text{c})}{\prod_{i=1}^{N}\bar{\Gamma}_i} \quad \text{where} \\
\label{eq:Constant_JD}
	A_N(R_\text{c}) & = (-1)^N\left(1 - 2^{R_\text{c}} \cdot e_N  \left(-R_\text{c} \ln(2)\right) \right).
\end{align}}\noindent
Here, $e_N(x)=\sum_{n=0}^{N-1}\frac{x^n}{n!}$ is the exponential sum function. For more details, we refer to the derivations in Appendix~\ref{app:ExactOutage}. The $N$th root of the inverse numerator $G_{\text{C},\text{JD}}=1/\sqrt[N]{A_N(R_\text{c})}$ is commonly termed the coding gain~\cite{wang2003}.  To our best knowledge, the coding gain of JD was unknown so far and constitutes an important original contribution of this work.\\ \indent
In contrast to the asymptotic solution in (\ref{eq:JD_Outage_6}), a lower bound for (\ref{eq:JD_Outage_4}) is given by~\cite[Ch.~9.1.3]{tse2005}
	\begin{align}
	P_{\text{JD},N}^{\text{out}} &  \geq 	\left[\text{Pr}[0 \leq \phi(\Gamma) < R_\text{c}/N]\right]^N \nonumber \\
	\label{eq:JD_Outage_7}
	& = \left[1-\exp\left(- \frac{A_1(R_\text{c}/N)}{\bar{\Gamma}}\right)\right]^N,
	\end{align}
for $\bar{\Gamma}_1=...=\bar{\Gamma}_N=\bar{\Gamma}$ and $A_1(R_\text{c}/N)=2^{R_\text{c}/N}-1$. Note that the lower bound is based on the assumption that an outage event occurs if the channel capacity of at least one fading channel cannot support the spectral efficiency $R_\text{c}/N$, i.e., each fading channel is allocated an equal share of the information. \\ \indent
In~\cite{Bai2013}, based on the outage exponent analysis, a lower and an upper bound are given by
\begin{equation}
\label{eq:JD_Outage_8}
P_{\text{JD},N}^{\text{out}} 
\begin{cases}
\geq & P_{\text{JD},N}^{\text{out,lower}}=a \exp\big(N \big[\left(\phi\left(\bar{\Gamma}\right)-\sfrac{R_\text{c}}{N}\right) E_{1,1}\left(\bar{\Gamma}\right) \\
& \qquad \qquad \,\, + E_{1,0}\left(\bar{\Gamma}\right)+\sfrac{E_0(\bar{\Gamma})}{N}+ o(N)\big]\big),\\
\leq & P_{\text{JD},N}^{\text{out,upper}}=b \exp\big(N \big[\left(\phi\left(\bar{\Gamma}\right)-\sfrac{R_\text{c}}{N}\right) E_{1,1}\left(\bar{\Gamma}\right)  \\
& \qquad \qquad \,\,+ E_{1,0}\left(\bar{\Gamma}\right)+\sfrac{E_0(\bar{\Gamma})}{N}+ o(N)\big]\big)
\end{cases}       
\end{equation}
where $a$ and $b$ are constants, with $a \leq b$. $P_{\text{JD},N}^{\text{out,lower}}$ and $P_{\text{JD},N}^{\text{out,upper}}$ are refereed to as the lower and upper outage exponents, respectively. The exact reliability functions $E_{1,1}\left(\bar{\Gamma}\right)$, $E_{1,0}\left(\bar{\Gamma}\right) $, and $ E_{0}\left(\bar{\Gamma}\right) $ are given in~\cite{Bai2013}. According to~\cite{Bai2013}, the outage probability differs for $R_\text{c}/N < C_\text{ergodic}$ and  $R_\text{c}/N \geq C_\text{ergodic}$, where $C_\text{ergodic} = \lim\nolimits_{N \rightarrow \infty} C_\text{JD}(\Gamma_1,...,\Gamma_N)/N = \mathds{E}\left[\phi(1+\Gamma)\right]$
is the ergodic capacity. The derivations of (\ref{eq:JD_Outage_8}) are mainly based on large deviations theory and Meijer's $G$-function~\cite{gradshteyn2014,dembo2010}. For more details on the outage exponent analysis, please refer to~\cite{Bai2013}. \\ \indent
In contrast to the bounds in (\ref{eq:JD_Outage_7}) and (\ref{eq:JD_Outage_8}), our asymptotic solution in (\ref{eq:JD_Outage_6}) offers a simple, yet accurate solution at high SNR. The outage exponent analysis in (\ref{eq:JD_Outage_8}) can achieve tight bounds on the outage probability, but the calculations involve  the incomplete Gamma function and Meijer's $G$-function, which makes any further analytical derivations on the SNR gain, DMT, and throughput rather involved. As \cite{Bai2013} did not explicitly considered the properties of the asymptotic outage probability, the exact solution of the coding gain remained unknown. The lower bound in (\ref{eq:JD_Outage_7}) is a simple solution, but not tight, as we show later on. On the other hand, our asymptotic solution in (\ref{eq:JD_Outage_6}) offers a remarkable simple asymptotical description of the outage probability at high SNR. Especially for URLLC, high-SNR results are well suited, as we are interested in frame error rates below $10^{-5}$. Note that, more generally than (\ref{eq:JD_Outage_7}) and (\ref{eq:JD_Outage_8}), our solution also allows for different average SNRs, which is of practical relevance, if  the signals are transmitted from or to different BSs. We detail this comparison via numerical examples in Section~\ref{NumericalResults}.
\subsection{Linear Combining}
\label{SC}
The outage probability for SC can be derived as follows~\cite[(2.42)]{duman2008}:
{\allowdisplaybreaks\begin{align}
		\label{eq:SC_Outage_1}
		P_{\text{SC},N}^{\text{out}} & = \text{Pr}\left[0 \leq \phi(\max\left(\Gamma_1,...,\Gamma_N\right) ) < R_\text{c} \right] \\
		\label{eq:SC_Outage_2}
		& = \prod\nolimits_{i=1}^N\int_{\gamma_i=0}^{2^{R_\text{c} }-1} f(\gamma_i) \text{d} \gamma_i \\
		\label{eq:SC_Outage_3}
		& = \prod\nolimits_{i=1}^{N}\bigg(1- \exp \bigg(-\frac{A_1(R_\text{c})}{\bar{\Gamma}_i}\bigg) \bigg), 
		\end{align}}\noindent
with $A_1(R_\text{c})=2^{R_\text{c}}-1$. The steps are justified as follows: in (\ref{eq:SC_Outage_1}) we substitute the SC channel capacity in~(\ref{eq:capacity_SC}) into the outage probability expression in~(\ref{eq:outage}); (\ref{eq:SC_Outage_2}) follows similar arguments as in (\ref{eq:JD_Outage_1}), (\ref{eq:JD_Outage_3}), and (\ref{eq:JD_Outage_4}), respectively, and the multiple integral can be rewritten as the product of single integrals, since the integral domain is normal and the SNRs are independent; (\ref{eq:SC_Outage_3}) is the closed-form solution of the integral in (\ref{eq:SC_Outage_2}). An asymptotic solution at high SNR can be derived by using the MacLaurin series of the exponential function  $\exp(-x_i) \approx 1- x_i$ for $x_i\rightarrow 0$, giving \cite[(2.43)]{duman2008}
\begin{align}
\label{eq:SC_Outage_4}
P_{\text{SC},N}^{\text{out}} & \approx  \frac{\left(A_1(R_\text{c})\right)^N}{\prod_{i=1}^{N}\bar{\Gamma}_i}, 
\end{align}
from which $G_{\text{C},\text{SC}}=1/A_1(R_\text{c})$ is the coding gain of SC. \\ \indent
To calculate the outage probability of MRC we define an auxiliary RV, namely, the total received SNR, as $\Gamma_\text{MRC} = \sum\nolimits_{i=1}^{N} \Gamma_i$.\\ \indent
We have to distinguish between two cases:
\paragraph{Identical received average SNRs} For $\bar{\Gamma}_1=...=\bar{\Gamma}_N=\bar{\Gamma}$ the pdf of the total received SNR is given by~\cite[(2.30)]{duman2008}
	\begin{align}
	\label{eq:pdfMRA_1}
	f_{\Gamma_\text{MRC}}(\gamma_\text{MRC})= \frac{\gamma_\text{MRC}^{(N-1)}}{(N-1)! \cdot \bar{\Gamma}^N}  \exp\left(-\frac{\gamma_\text{MRC}}{\bar{\Gamma}}\right).
	\end{align}
The outage probability can be then calculated as follows~\cite[(2.31)-(2.33)]{duman2008}:
	{\allowdisplaybreaks	\begin{align}
	\label{eq:MRC_Outage_1}
	P_{\text{MRC},N}^{\text{out}} & = \text{Pr}\left[0 \leq \phi(\Gamma_\text{MRC} ) < R_\text{c} \right] \\
	\label{eq:MRC_Outage_2}
	& = \int_{\gamma_\text{MRC}=0}^{2^{R_\text{c} }-1} f(\gamma_\text{MRC}) \text{d} \gamma_\text{MRC} \\
	\label{eq:MRC_Outage_3}
	& = 1 - \exp\bigg(-\frac{A_1(R_\text{c})}{\bar{\Gamma}}\bigg) \bigg(\sum\nolimits_{i=1}^{N} \frac{\left(\sfrac{A_1(R_\text{c})}{\bar{\Gamma}}\right)^{(i-1)}}{(i-1)!}\bigg). 
	\end{align}}\noindent
The steps can be justified similarly to (\ref{eq:SC_Outage_1})-(\ref{eq:SC_Outage_2}). The closed form of the integral in (\ref{eq:MRC_Outage_2}) is given in \cite[(2.33)]{duman2008}. An asymptotic solution can be derived at high SNR \cite[(16)]{chen2005} as
\begin{align}
\label{eq:MRC_Outage_4}
P_{\text{MRC},N}^{\text{out}} & \approx \frac{1}{N!}\bigg(\frac{A_1(R_\text{c})}{\bar{\Gamma}}\bigg)^N,
\end{align}
from which $G_{\text{C},\text{MRC}}=\sqrt[N]{N!}/A_1(R_\text{c})$ is the coding gain of MRC.
\paragraph{Different received average SNRs} For $\bar{\Gamma}_1\neq...\neq\bar{\Gamma}_N$ the pdf of the total received SNR is given by~\cite[Proposition~3.1]{Bibinger2013}
\begin{align}
	f_{\Gamma_\text{MRC}}(\gamma_\text{MRC})& = \sum\nolimits_{i=1}^{N} \bar{\Gamma}_i^{N-2} \exp \left(-\frac{\gamma_\text{MRC}}{\bar{\Gamma}_i}\right) \nonumber\\
	\label{eq:pdfMRA_2}
	& \quad \times \prod\nolimits_{\substack{j=1\\j \neq i}}^{N} \frac{1}{\bar{\Gamma}_i-\bar{\Gamma}_j} .
\end{align}
The outage probability can be then calculated as follows:
	{\allowdisplaybreaks	\begin{align}
		\label{eq:MRC_Outage_5}
		P_{\text{MRC},N}^{\text{out}} & = \text{Pr}\left[0 \leq \phi(\Gamma_\text{MRC} ) < R_\text{c} \right] \\
 		& = \int_{\gamma_\text{MRC}=0}^{2^{R_\text{c} }-1} \sum\nolimits_{i=1}^{N} \bar{\Gamma}_i^{N-2} \exp \left(-\frac{\gamma_\text{MRC}}{\bar{\Gamma}_i}\right)  \nonumber\\
 		\label{eq:MRC_Outage_8}
 		& \quad \times  \prod\nolimits_{\substack{j=1\\j \neq i}}^{N} \frac{1}{\bar{\Gamma}_i-\bar{\Gamma}_j} \text{d} \gamma_\text{MRC} \\
 		& = \sum\nolimits_{i=1}^{N} \bar{\Gamma}_i^{N-1} \left( 1-  \exp \left(-\frac{A_1(R_\text{c})}{\bar{\Gamma}_i}\right) \right)  \nonumber\\
 		\label{eq:MRC_Outage_9}
 		& \quad \times  \prod\nolimits_{\substack{j=1\\j \neq i}}^{N} \frac{1}{\bar{\Gamma}_i-\bar{\Gamma}_j}.
		\end{align}}\noindent
The steps can be justified similarly to (\ref{eq:SC_Outage_1})-(\ref{eq:SC_Outage_3}). No further simplification based on high SNRs can be achieved for (\ref{eq:MRC_Outage_9}). However, the asymptotic solution of identical received average SNRs in (\ref{eq:MRC_Outage_4}) gives an upper bound at high SNR for (\ref{eq:MRC_Outage_9}) by replacing $\bar{\Gamma}^N$ with $\prod_{i=1}^{N}\bar{\Gamma}_i$ yielding
\begin{align}
 		\label{eq:MRC_Outage_10}
 		P_{\text{MRC},N}^{\text{out}} & \lesssim  \frac{1}{N!}\frac{\left(A_1(R_\text{c})\right)^N}{\prod_{i=1}^{N}\bar{\Gamma}_i}. 
\end{align}
 For more details on this upper bound, we refer to the derivations in Appendix~\ref{app:UpperBound}.
\subsection{Single-Connectivity}
In addition, the exact and asymptotic outage probability of SCo (e.g., $N=1$ for (\ref{eq:SC_Outage_3}) and (\ref{eq:SC_Outage_4})) are given as a baseline by
\begin{align}
\label{eq:SCo_Outage_1}
	P_\text{SCo}^\text{out} & = 1- \exp \left(-\frac{A_1(R_\text{c})}{\bar{\Gamma}} \right) \\
\label{eq:SCo_Outage_2}	
	& \approx \frac{A_1(R_\text{c})}{\bar{\Gamma}},
\end{align}
from which $G_{\text{C},\text{SCo}}=1/A_1(R_\text{c})$ is the coding gain of SCo. 
\section{Throughput}
\label{Throughput}
The throughput captures how much information is received at the destination on average per transmission, depending on the SNR. To capture this, following the standard approach in the literature \cite{Molisch2012}, we define the throughput $T$ as the product of the bandwidth $B$, spectral efficiency $R_\text{c}$, and the non-outage probability $(1-P^\text{out})$, i.e.,
\begin{align}
	\label{eq:system_throughput}
	T= B  R_\text{c} (1-P^\text{out}) \text{ in bit/s}.
\end{align}
To evaluate (\ref{eq:system_throughput}), we require the achieved spectral efficiency $R_\text{c}$ for the different combining algorithms ($j\in \{\text{JD},\text{SC},\text{MRC}\}$) for a given number of links $N$, outage probability $P^\text{out}$, and average received SNRs $\bar{\Gamma}_\mathcal{N}$.\\ \indent  
Now, by reformulating (\ref{eq:JD_Outage_6}) so as to express the achieved spectral efficiency in terms of a given outage probability, we obtain a high-SNR asymptotic expression for the throughput of JD as
 \begin{align}
 \label{eq:system_throughput_JD}
 T_{\text{JD},N} & \approx B  A_N^{-1}\left(P^\text{out} \prod\nolimits_{i=1}^{N}\bar{\Gamma}_i\right)(1-P^\text{out})  \text{ in bit/s}.
\end{align}
Here, $A_N^{-1}(\cdot)$ is the inverse function of $A_N(\cdot)$. Unfortunately, the inverse function $A_N^{-1}(\cdot)$ does not have a closed-form solution. However, we can give a good approximation. At high SNR, the achieved spectral efficiency for a given outage probability is $R_\text{c} \gg 1$. In this case, the inverse function can be given by use of an approximation of the asymptotic Lamber $W$ function \cite{hoorfar2007}~by
\begin{align}
 \label{eq:Inverse_Function}
A_N^{-1}\left(P^\text{out} \prod\nolimits_{i=1}^{N}\bar{\Gamma}_i\right)& =R_\text{c} \approx \frac{N-1}{\ln(2)} \left[\ln(\zeta)-\ln(\ln(\zeta))\right],  
\intertext{where}
\zeta & =\frac{\sqrt[N-1]{ (N-1)! P^\text{out} \prod\nolimits_{i=1}^{N}\bar{\Gamma}_i}}{N-1}.
\end{align}
 For more details, we refer to the derivations in Appendix~\ref{app:InverseFunction}. The asymptotic throughputs for SC and MRC can be given based on (\ref{eq:SC_Outage_4}) and (\ref{eq:MRC_Outage_10}) as
 \begin{align}
	T_{\text{SC},N} & \approx B \ld\left( \sqrt[N]{P^\text{out} \prod\nolimits_{i=1}^{N}\bar{\Gamma}_i} +1 \right) \nonumber \\
  	\label{eq:system_throughput_SC}
  	& \quad \times 	(1-P^\text{out}) \text{ in bit/s} , \\
	T_{\text{MRC},N} & \approx B\ld\left( \sqrt[N]{N! \cdot P^\text{out} \prod\nolimits_{i=1}^{N}\bar{\Gamma}_i} +1 \right) \nonumber \\
	\label{eq:system_throughput_MRC}
	& \quad \times	(1-P^\text{out}) \text{ in bit/s},
\end{align}
respectively. In addition, we give the asymptotic throughput of SCo (e.g., from $N=1$ in (\ref{eq:system_throughput_SC})):
\begin{align}
  \label{eq:system_throughput_SCo}
	T_{\text{SCo}} & \approx B \ld\left( P^\text{out} \bar{\Gamma} +1 \right)(1-P^\text{out})  \text{ in bit/s}.
\end{align}
\section{Multi-Connectivity Gain}
\label{MCo_Gain}
In this section, we quantify the performance gain of MCo over SCo in terms of transmit power reduction. For MCo we consider the optimal combining algorithm, i.e., JD. To ensure a fair comparison between different setups, we equally allocate the total transmit power $P_\text{T}$ to all channels, such that $P_i=P_\text{T}/N, \forall i \in \mathcal{N}$. However, this assumption is non-essential, and other system setups can be evaluated from our formulas with some effort. \\ \indent
We assume a target (fixed) spectral efficiency $R_\text{c}$ (i.e., a certain throughput has to be guaranteed) and target (fixed) outage probability $P^{\text{out}}$, and evaluate the required total average SNR $\bar{\Gamma} =\sigma_{(\cdot)}(P^\text{out})$ in the high-SNR regime. The SNR gain is defined as the ratio of the required average SNR between SCo and MCo.\\ \indent
A reformulation of (\ref{eq:JD_Outage_6}) yields
\begin{align}
\label{eq:reqSNR_JD}
\sigma_{\text{JD}}(P^{\text{out}})=\frac{P_\text{T}}{N_0}= N \sqrt[N]{\frac{A_N(R_\text{c})}{P^{\text{out}}}}\frac{1}{\sqrt[N]{\prod_{i=1}^{N} d_i^{-\eta}}},
\end{align}
where $\sigma_{\text{JD}}(P^{\text{out}})$ is the required total average SNR for JD. The required total average SNR for SCo based on the reformulation of (\ref{eq:SCo_Outage_2}) is 
\begin{align}
\label{eq:reqSNR_SCo}
\sigma_{\text{SCo}}(P^{\text{out}})=\frac{P_\text{T}}{N_0}=  \frac{A_1(R_\text{c})}{P^\text{out}} \frac{1}{d_1^{-\eta}}.
\end{align}
The SNR gain is then given as the ratio between the required average SNRs for SCo and JD as
\begin{align}
G_{\text{MCo},\text{SCo}}& =\frac{\sigma_{\text{SCo}}(P^{\text{out}})}{\sigma_{\text{JD}}(P^{\text{out}})} \nonumber \\
\label{eq:SNRGain_JD}
& =\frac{A_1(R_\text{c})}{N\sqrt[N]{A_N(R_\text{c})} } \frac{1}{\sqrt[N]{\left(P^\text{out}\right)^{N-1}}}  \frac{\sqrt[N]{\prod_{i=1}^{N} d_i^{-\eta}}}{d_1^{-\eta}}.
\end{align}
Based on (\ref{eq:SNRGain_JD}), we can answer the fundamental questions formulated at the beginning of the Introduction, i.e., how much transmit power can be saved by MCo as compared with SCo depending on the number of links $N$, the spectral efficiency  (corresponding to the throughput $T \approx B R_\text{c}$), the path loss $d_i^{-\eta}$ for $i \in \mathcal{N}$, and the outage probability $P^\text{out}$. 
\section{Joint Decoding vs. Linear Combining}
In this section, we evaluate the performance improvement of JD over SC and MRC. All combining algorithms for MCo are superior to SCo, since the multiple diversity branches are exploited, i.e., the diversity gain is $N$. However, there exists a difference of the outage probabilities governed by the coding gains $G_{\text{C},(\cdot)}$ of each combining algorithm ((\ref{eq:JD_Outage_6}), (\ref{eq:SC_Outage_4}), and (\ref{eq:MRC_Outage_4})). Based on these equations, the performance improvement of JD over SC and MRC can be quantified in terms of the SNR gain, which is the ratio of the coding gains, 
\begin{align}
\label{eq:Gain_SC}
G_{\text{JD},\text{SC}}= & \frac{G_{\text{C},\text{JD}}}{G_{\text{C},\text{SC}}}  = \frac{A_1(R_\text{c})}{\sqrt[N]{A_N(R_\text{c})}}> \sqrt[N]{N!}, \quad \text{and} \\
\label{eq:Gain_MRC}
G_{\text{JD},\text{MRC}}= & \frac{G_{\text{C},\text{JD}}}{G_{\text{C},\text{MRC}}} = \frac{1}{\sqrt[N]{N!}} \cdot \frac{A_1(R_\text{c})}{\sqrt[N]{A_N(R_\text{c})}} > 1,
\end{align}
respectively. In  Lemma~\ref{th:A_N} (see Appendix~\ref{app:Lemma6}) we prove that the SNR gain of JD over MRC is strictly larger than one, which implies that the SNR gain of JD over SC is strictly larger than $\sqrt[N]{N!}$. %The SNR gains have the same slope for the spectral efficiency as in (\ref{eq:SNRGain_JD_dif_R_c}), namely $3(N-1)/N$~dB per source sample/channel symbol. However, as the diversity gain is the same for SC, MRC, and JD, the dependency on the outage probability vanishes.
\section{SNR gain vs. Diversity-Multiplexing Tradeoff} 
\label{DMT}
In this section we relate the SNR gain to the DMT analysis. Note that the SNR gain in (\ref{eq:SNRGain_JD}) can be separated into two parts, one depending on the spectral efficiency and the other depending on the outage probability. Both parts are influenced by the number of links. As we can easily see, with a decreasing outage probability, the SNR gain increases, scaled by the power of $(N-1)/N$, i.e.,
\begin{align}
\label{eq:SNRGain_JD_dif_Out}
\frac{  \partial 10 \log_{10} \left(G_{\text{MCo},\text{SCo}} \right)~\text{dB}}{\partial P^\text{out}} = - 4.3\frac{N-1}{N} \frac{1}{P^\text{out}}~\text{dB}.
\end{align}
The dependency of the SNR gain in terms of the spectral efficiency cannot be seen that easily. However, similarly as for the throughput (cf. (\ref{eq:Lambert_1})) we can simplify the SNR gain for sufficiently high spectral efficiencies, i.e., $R_\text{c} \gg 1$, as
\begin{align}
G_{\text{MCo},\text{SCo}} & \approx \sqrt[N]{\frac{(N-1)!}{\left(\ln(2)\right)^{N-1} N^N}}  \frac{2^{R_\text{c} \frac{N-1}{N}}}{R_\text{c}^{\frac{N-1}{N}}}  \nonumber \\
\label{eq:SNRGain_JD_approx}
& \quad \times \frac{1}{\sqrt[N]{\left(P^\text{out}\right)^{N-1}}}  \frac{\sqrt[N]{\prod_{i=1}^{N} d_i^{-\eta}}}{d_1^{-\eta}}.
\end{align}	
Now, we can see that with increasing spectral efficiency, the SNR gain increases, scaled by the factor $(N-1)/N$, i.e., 
\begin{align}
\label{eq:SNRGain_JD_dif_R_c}
\frac{  \partial 10 \log_{10} \left(G_{\text{MCo},\text{SCo}} \right)\text{dB} }{\partial R_\text{c}} \approx 3\frac{N-1}{N}~\text{dB}.
\end{align}
A similar analysis for the SNR gain of JD over SC and MRC in (\ref{eq:Gain_SC}) and (\ref{eq:Gain_MRC}), respectively, leads to the same result as in (\ref{eq:SNRGain_JD_dif_R_c}) with respect to the spectral efficiency, while being insensitive to the target outage probability.	
	
We see that in (\ref{eq:SNRGain_JD_dif_Out}) and (\ref{eq:SNRGain_JD_dif_R_c}) there is a factor of $(N-1)/N$. This factor can be related to the DMT analysis, as we show in the following. In the context of MIMO systems~\cite{Zheng2003}, it is proven that for a multiplexing gain%\footnote{$N \bar{\Gamma}$ is the average system SNR and $\bar{\Gamma}$ the average SNR per link.}
\begin{align}
\label{eq:multiplexing_gain}
r= & \lim\limits_{\bar{\Gamma} \rightarrow \infty}\frac{R_c(\bar{\Gamma})}{\ld\left(N \bar{\Gamma}\right)} , 
\intertext{the diversity gain $d$ will not exceed}
\label{eq:diversity_gain}
d(r)= & -\lim\limits_{\bar{\Gamma} \rightarrow \infty} \frac{\ld P^\text{out}_{\cdot,N}(r,\bar{\Gamma})}{\ld \left(N \bar{\Gamma}\right)},
\end{align}
for  $\bar{\Gamma}_1=...=\bar{\Gamma}_N=\bar{\Gamma}$, i.e., all distances are normalized to unit, and average system SNR $N \bar{\Gamma}$. By applying a singular value decomposition, the MIMO fading channel can also be transformed into a parallel fading channel in the space domain\cite{tse2005}. Thus, it is reasonable to relate the DMT with the SNR gain.\\ \indent
For JD, the diversity gain is a function of the multiplexing gain given by	
	\begin{align}
	\label{eq:diversity_gain_JD_0}
	d_\text{JD}(r)= & N-r, \qquad \qquad r \in [0,N].\\	
\intertext{The DMT for SC and MRC is given by}
	\label{eq:diversity_gain_SC_0}
	d_j(r)= & N \cdot (1-r), \qquad r \in [0,1],
	\end{align}
for $j\in\{\text{SC,\text{MRC}}\}$. For more details, we refer to the derivations in Appendix~\ref{app:Theorem4}. \\ \indent
The DMT of JD based on the lower bound in~(\ref{eq:JD_Outage_7}) given in~\cite[Ch.~9.1.3]{tse2005} is aligned with our results. It is not surprising that JD outperforms SC and MRC in terms of the multiplexing gain. Both SC and MSC perform a non-invertible linear transform on the received signal vector, collapsing the dimension from $N$ to one. It is obvious that diversity can be maintained with SC and MRC, but both will suffer with respect to JD when the goal is to achieve multiplexing gain. \\ \indent
At full diversity $(d_\text{JD}=N,r=0)$ (i.e., fixed spectral efficiency), the SNR gain increases by a factor proportional to $(N-1)/N$ with a decreasing outage probability (cf. (\ref{eq:SNRGain_JD_dif_Out})). The term $(N-1)/N$ is the relative maximum diversity gain of MCo (with JD) and SCo. 
At full multiplexing $(d_\text{JD}=0,r=N)$ (i.e., fixed outage probability), the SNR gain increases by a factor proportional to  $(N-1)/N$ with a increasing spectral efficiency (cf. (\ref{eq:SNRGain_JD_dif_R_c})). Similar to the full diversity, the term $(N-1)/N$ is the relative maximum multiplexing gain of MCo (with JD) and SCo. In summary, both performance improvements are governed by the relation of the maximum diversity and maximum multiplexing gains of MCo (with JD) and SCo.

Based on the DMT analysis, one can give the slope of the SNR gain in the spectral efficiency and outage probability, but not the SNR gain itself. Furthermore, the slope in the spectral efficiency is merely valid for sufficiently  high values, as we discuss in the next section.
\section{Numerical Results}
\label{NumericalResults}
In this section, we illustrate and discuss the exact, asymptotic, and the lower bound outage probabilities of JD as well as the exact and asymptotic throughput of JD. Furthermore, we illustrate and discuss the corresponding SNR gain. We equally allocate the total transmit power $P_\text{T}$ to all channels, such that $P_i=P_\text{T}/N, \forall i \in \mathcal{N}$. Furthermore, we normalize all distances to one. We define the average system transmit SNR as $P_\text{T}/N_0$.\\ \indent
\begin{figure*}[t]
		\centering
	\begin{subfigure}{0.45\linewidth}
		\includegraphics[width=\linewidth]{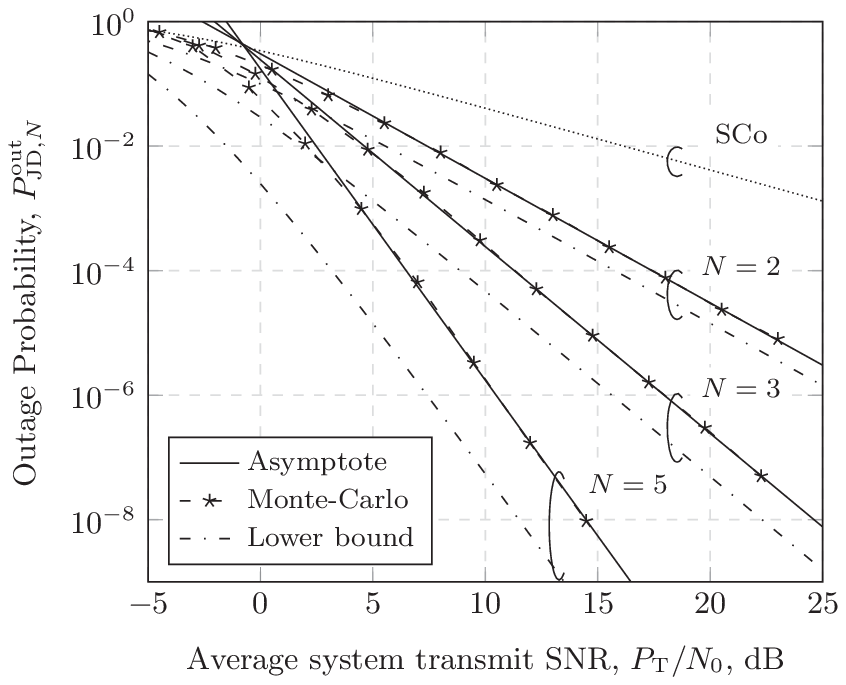}
		\caption{}
		\label{fig:Outage_MCo}
	\end{subfigure}
	\begin{subfigure}{0.45\linewidth}
		\includegraphics[width=\linewidth]{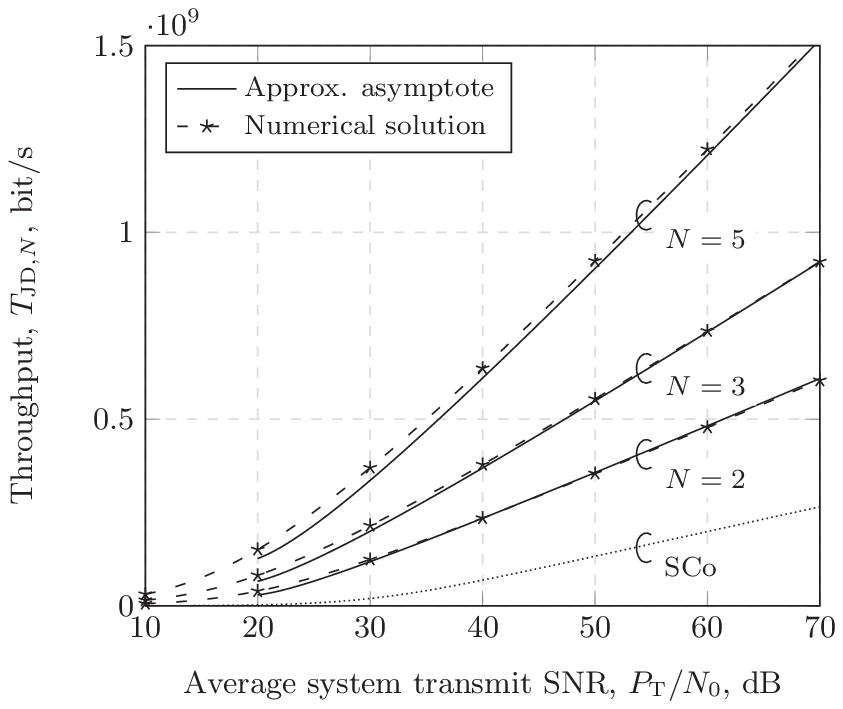}
		\caption{}
		\label{fig:Throughput_MCo}
	\end{subfigure}
	\caption{(a) JD outage probability for Monte-Carlo simulation, asymptote, and lower bound, with $N\in \{2,3,5\}$ and, $R_\text{c}=0.5$, and (b) JD throughput Monte-Carlo simulation, approximated asymptote with $N\in\{2,3,5\}$, $B=20$~MHz, and $P^\text{out}=10^{-3}$. The outage probability and throughput of SCo are depicted for comparison.}
	\label{fig:Outage_Throughput}
\end{figure*}
Fig.~\ref{fig:Outage_MCo} depicts the outage probability of JD (Monte-Carlo simulation of (\ref{eq:JD_Outage_4}), our asymptote in (\ref{eq:JD_Outage_6}), and the existing lower bound in (\ref{eq:JD_Outage_7})) versus the average system transmit SNR $P_\text{T}/N_0$. For comparison, we include the SCo outage probability in (\ref{eq:SCo_Outage_1}).  We show results for $N\in\{2,3,5\}$ and a constant spectral efficiency of $R_\text{c}=0.5$. We can observe the following: (i) the asymptote is very tight at medium and high SNR; (ii) with every additional link the diversity gain $d_\text{JD}(r)$ increases by one with constant spectral efficiency, i.e., the multiplexing gain is $r=0$; and (iii) the SNR offset of the lower bound increases with the number of links. At this point, we would like to clarify our assumptions on the SNR range. It is noteworthy that, even though our outage analysis is based on high SNR, it leads to accurate results in the low-to-medium SNR region as well. In Fig.~\ref{fig:Outage_MCo}, for instance, the asymptotic outage probability with 5 links is already tight for an outage probability of $P^\text{out}_{\text{JD},5}=10^{-3}$. The corresponding average system transmit SNR is then $P_\text{T}/N_0=5$~dB. That means that the average transmit SNR per link is $P_i/N_0=-2$~dB, which falls in the low-to-medium SNR region.\\ \indent
\emph{Remark:} The lower and upper bounds based on the outage exponent analysis in (\ref{eq:JD_Outage_8}) are tight, cf. \cite[Fig.~3 - Fig.~7]{Bai2013}, but to achieve tight bounds for the entire SNR range heavy computation efforts are required. Especially for the class of URRLC applications, the low-SNR range is not of interest, as the region of the required outage probabilities is at medium to high SNR.\\ \indent
Fig.~\ref{fig:Throughput_MCo} depicts the throughput for JD (numerical solution of (\ref{eq:system_throughput}), our asymptote in (\ref{eq:system_throughput_JD}) with the approximation of the asymptotic inverse function in (\ref{eq:Inverse_Function})) versus the average system transmit SNR $P_\text{T}/N_0$.  For comparison, we illustrate the SCo throughput in (\ref{eq:system_throughput_SCo}). We show results for $N\in\{2,3,5\}$, $B=20$~MHz, and an outage probability of $P^\text{out}=10^{-3}$. The following can be observed: (i) the approximated asymptote is very tight at high SNR; and (ii) for increasing SNR, the JD throughput increases asymptotically with $N\cdot20$~Mbit/s per $3$~dB, whereas the SCo throughput increases asymptotically with $1\cdot20$~Mbit/s per $3$~dB.\\ \indent
In the following, we show numerical results for the SNR gain with $P^\text{out}\in\{10^{-3},10^{-5}\}$ and $N\in \{2,3,4\}$. As shown in Fig.~\ref{fig:Outage_MCo}, the asymptotic outage probability is very tight within this range, i.e., the following numerical results based on the asymptotic outage probability barely differ from the numerical results based on the exact outage probability. \\ \indent
\begin{figure*}[t]
	\centering
	\begin{subfigure}{0.45\linewidth}
		\includegraphics[width=\linewidth]{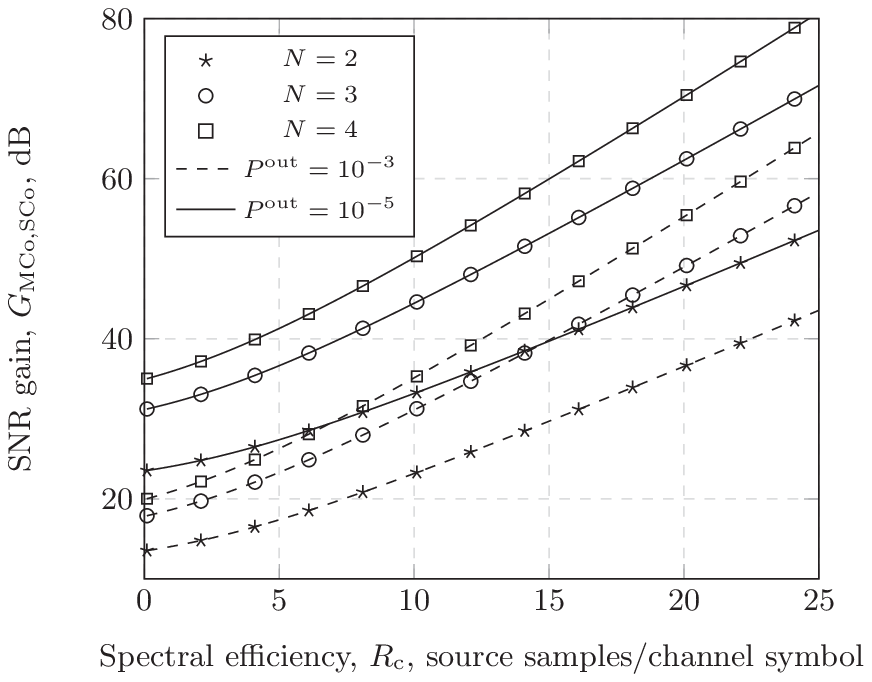}
		\caption{}
		\label{fig:Gain_SNR}
	\end{subfigure}
	\begin{subfigure}{0.45\linewidth}
		\includegraphics[width=\linewidth]{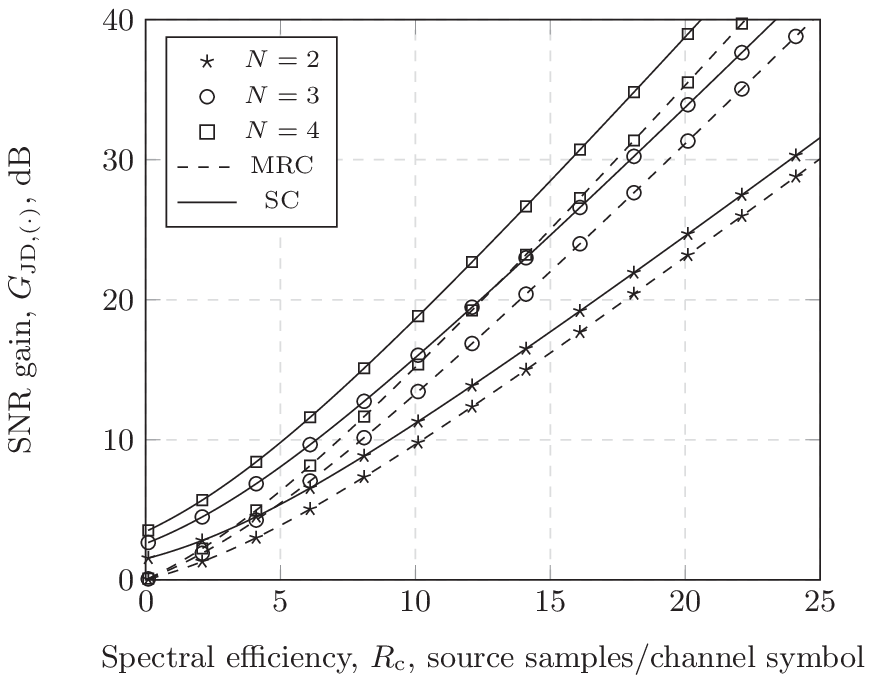}
		\caption{}
		\label{fig:Gain_Rate}
	\end{subfigure}
	\caption{(a) SNR gain of MCo (with JD) over SCo, with $N\in \{2,3,4\}$ and $P^\text{out}\in\{10^{-3},10^{-5}\}$ and (b) SNR gains of JD over SC and MRC, with $N\in \{2,3,4\}$.}
	\label{fig:Gain_MCo}
\end{figure*}
Fig.~\ref{fig:Gain_SNR} depicts the SNR gain of MCo over SCo ---$G_{\text{MCo},\text{SCo}}$ in (\ref{eq:SNRGain_JD})--- versus the spectral efficiency $R_\text{c}$ (corresponding to the throughput $T\approx B R_\text{c}$).  We show results with $N\in \{2,3,4\}$ number of links and an outage probability of $P^\text{out}\in\{10^{-3},10^{-5}\}$. The following can be observed: (i) the SNR gain increases with the number of links, spectral efficiency, and decreasing outage probability; (ii) a decrease in outage probability manifests itself as a vertical shift of the respective SNR gain, i.e., from (\ref{eq:SNRGain_JD_dif_Out}) we have $\Delta G_{\text{MCo},\text{SCo}}= 2 \cdot 10 \frac{N-1}{N}$~dB for an outage probability shift from  $10^{-3}$ to $10^{-5}$ ; and (iii) for sufficiently high spectral efficiencies, the SNR gain increases by $3(N-1)/N$~dB per source sample/channel symbol (cf. (\ref{eq:SNRGain_JD_dif_R_c})). \\ \indent
Fig.~\ref{fig:Gain_Rate} depicts the SNR gain of JD ---$G_{\text{JD},(\cdot)}$ given in (\ref{eq:Gain_SC}) and (\ref{eq:Gain_MRC})--- over SC and MRC, respectively, versus the spectral efficiency $R_\text{c}$. The following can be observed: (i) the SNR gain of JD is greater than one, as proven in Lemma~\ref{th:A_N}; (ii) the SNR gain of JD increases with $N$ and $R_c$; (iii) the SNR gain of JD with respect to MRC differs from the SNR gain of JD with respect to SC by $\sfrac{1}{\sqrt[N]{N!}}$; (iv) for very low spectral efficiencies the SNR gain of JD over MRC vanishes; and (iv) for sufficiently high spectral efficiencies, the SNR gain increase by $3(N-1)/N$~dB per source sample/channel symbol (cf. (\ref{eq:SNRGain_JD_dif_R_c})).\\ \indent
Note that the range of practical spectral efficiencies is within $R_\text{c}\in[0.5,4.\bar{6}]$ (e.g., $1/2$ channel code rate and BPSK, or $2/3$ channel code rate and $128$-QAM). However, we illustrate spectral efficiency up to $R_\text{c}=25$, in order to show the asymptotic slope of the SNR gain.

\section{Cellular Field Trial for the Uplink}
\label{FieldTrail}
In \cite{Grieger2014}, measurements were carried out in a field trial testbed in Dresden (Germany) downtown. In this section we make use of this measurement data to show the potential of MCo in a real cellular network. In the field trial, the uplink was considered. Next we shortly introduce the field trial setup and then present the empirical outage probability and throughput cumulative distribution functions (CDFs) for the measurement data. Our results elaborate on the following points: (i) the performance improvement of MCo over SCo and (ii) the performance gain of JD in comparison to SC and MRC.  
\begin{figure*}[t]
	\centering
		\begin{subfigure}{0.35\linewidth}
	\includegraphics[width=\linewidth,trim={2.5cm 6.5cm 2.5cm 7.5cm},clip]{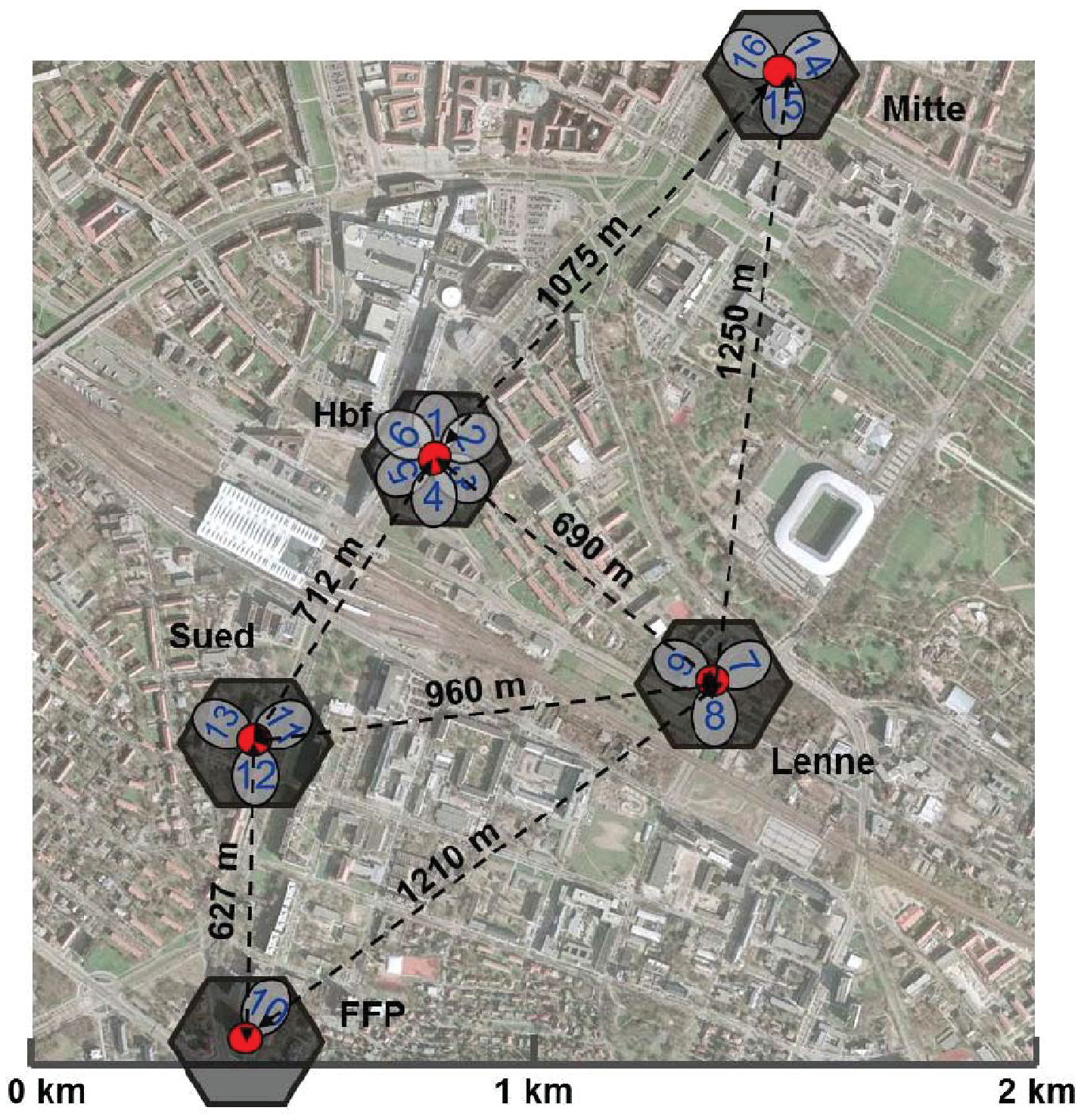}
			\caption{}
			\label{fig:Testbed}
		\end{subfigure}
		\begin{subfigure}{0.52\linewidth}
			\includegraphics[width=\linewidth]{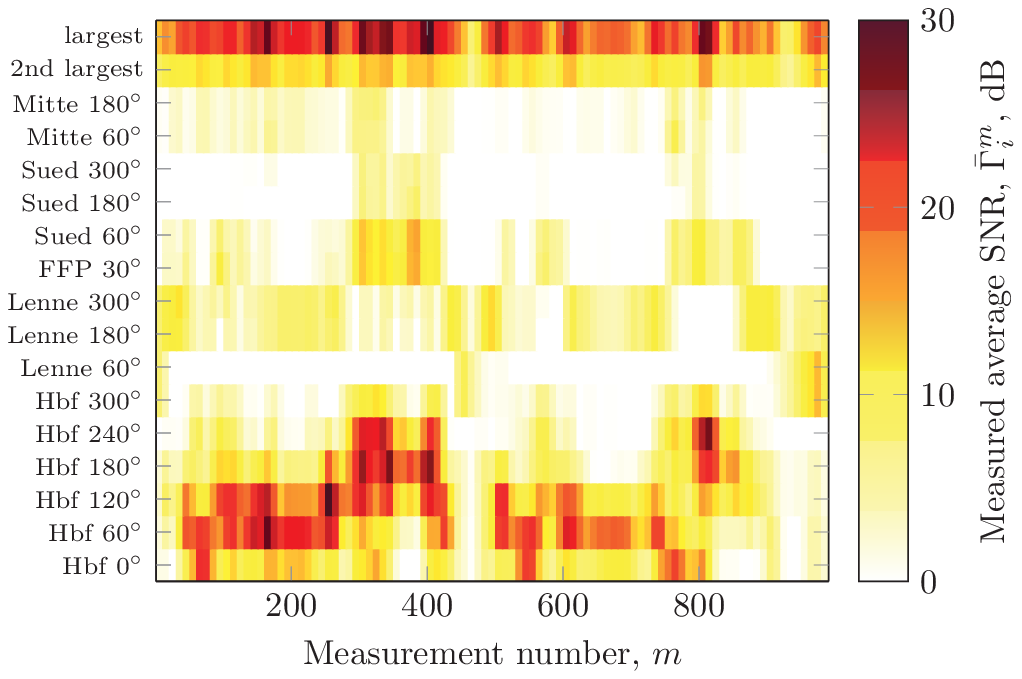}
			\caption{}
			\label{fig:SNRData}
		\end{subfigure}	
		\begin{subfigure}{0.45\linewidth}
			\includegraphics[width=\linewidth]{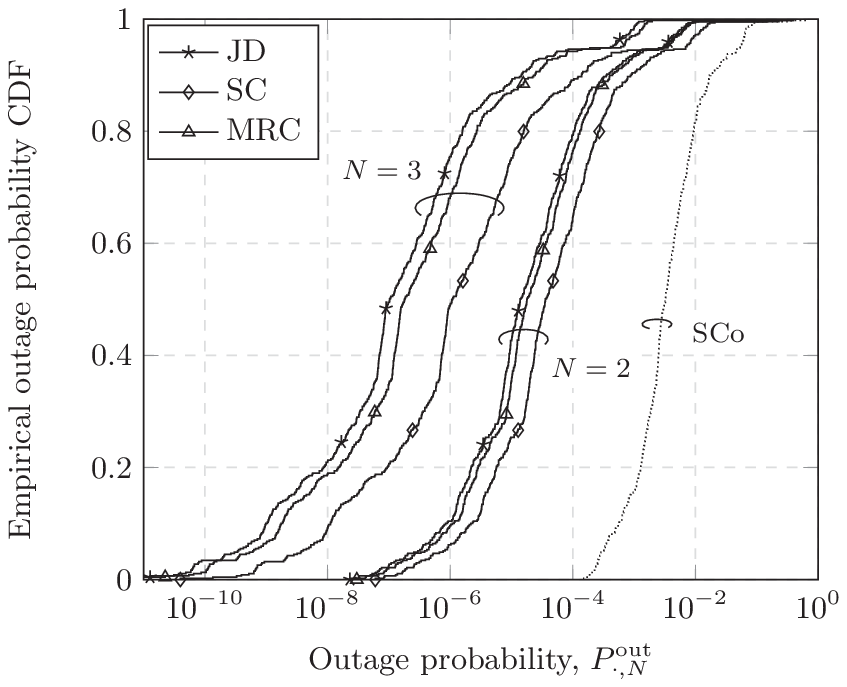}
			\caption{}
			\label{fig:CDF_Outage}
		\end{subfigure}		
			\begin{subfigure}{0.45\linewidth}
		\includegraphics[width=\linewidth]{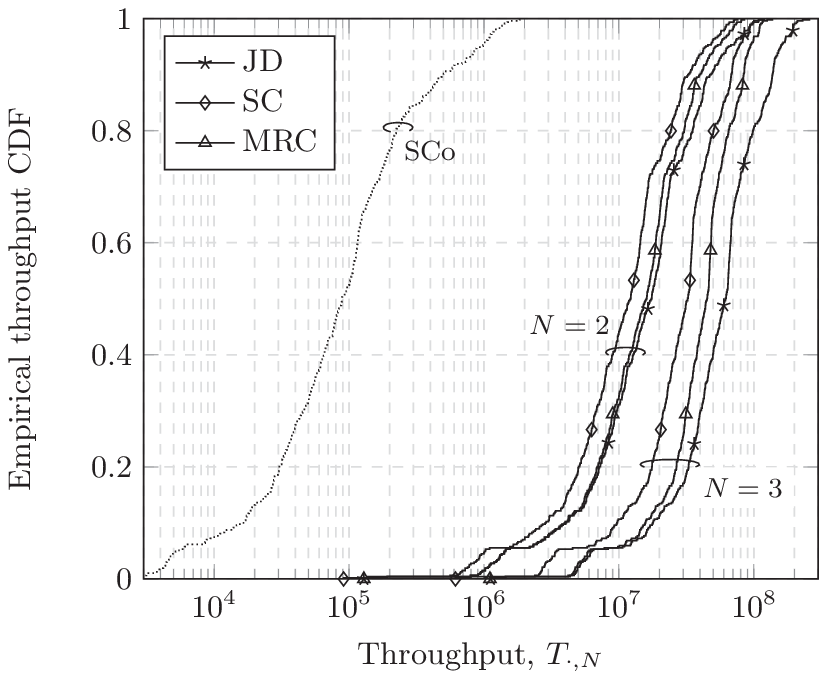}
		\caption{}
		\label{fig:CDF_Throughput}
	\end{subfigure}	
	\caption{(a) Testbed deployment, (b) measured average SNR $\bar{\Gamma}_i^m$ achieved at all BSs of the testbed during the complete field trial \cite{Grieger2014}, (c) empirical outage probability CDF ($N\in \{2,3\}$, $R_\text{c}=1$), and (d)  empirical throughput CDF ($N\in \{2,3\}$, $P^\text{out}=10^{-5}$) for JD, SC, and MRC.}
\end{figure*}
\subsection{Field Trial Setup}
The field trial testbed, deployed in Dresden downtown, is depicted in Fig.~\ref{fig:Testbed}. In total, 16 BSs located on five sites with up to six-fold sectorization were used for the measurements. During the field trial, two UEs were moved on a measurement bus in 5 m distance while transmitting on the same time and frequency resources employing one dipole antenna each. The superimposed signal is jointly received by all BSs, which took snapshots of 80 ms (corresponding to 80 transmit time intervals) every 10 s. In total, about 1900 such measurements were taken in order to observe a large number of different transmission scenarios. In Fig. \ref{fig:SNRData} the measured average SNR $\bar{\Gamma}_i^m$ values for around 1000 measurements observed at all BSs and locations are shown, where $m$ denotes the measurement number and $i$ the BS index, $i \in \{\text{Hbf } 0\si{\degree},\text{Hbf } 60\si{\degree},... \}$. The two largest average SNRs measured at any BS for each measurement are depicted in the upper part of the figure. An interesting result is that multiple relatively high average SNR values of two different BSs  are observed at each location of the UEs. Since combining algorithms are particularly beneficial in scenarios with multiple relatively high average SNR values, this result indicates that cooperation among BSs can provide a much more reliable data transmission, as confirmed next. For more details on this field trial setup, please refer to \cite{Grieger2014}.
\subsection{Empirical CDFs for Outage Probability and Throughput}
With the measured average SNR $\bar{\Gamma}_i^m$ in Fig. \ref{fig:SNRData} we can generate an empirical outage probability and throughput CDF. For each measurement we consider the $N$ strongest links, i.e., the largest measured average SNRs $\bar{\Gamma}_i^m$. The outage probability can be assessed with (\ref{eq:JD_Outage_6}), (\ref{eq:SC_Outage_3}) and (\ref{eq:MRC_Outage_9}) for JD, SC, and MRC, respectively, for each measurement. Similarly, the throughput is given by (\ref{eq:system_throughput_JD}), (\ref{eq:system_throughput_SC}), and (\ref{eq:system_throughput_MRC}) for JD, SC, and MRC, respectively.\\ \indent  
Fig.~\ref{fig:CDF_Outage} and Fig.~\ref{fig:CDF_Throughput} depict the empirical outage probability CDF and the empirical throughput CDF, respectively. We show results for $N\in\{2,3\}$ number of links, a spectral efficiency of $R_\text{c}=1$ (Fig.~\ref{fig:CDF_Outage}), and an outage probability of $P^\text{out}=10^{-5}$ (Fig.~\ref{fig:CDF_Throughput}). The following can be observed: (i) MCo is much superior to SCo and (ii) JD outperforms SC and MRC, the performance gain increasing with the number of links from source to destination.
\subsection{Discussion}
Based on the field trial setup, we can conclude that MCo can achieve a substantial performance improvement in real cellular networks. The measurement data at hand documents that multiple relatively high average SNR values frequently occur, for which combining algorithms are particularly beneficial. From the uplink measurement data we can also draw conclusions for the downlink. As argued in Section~\ref{LinkModel}, the statistical properties of the link model are identical for the up- and downlinks. Thus, it is reasonable to assume that multiple relatively high average SNR values frequently occur in the downlink as well. Based on this, MCo can also achieve low outage probabilities and high throughput in the downlink of real cellular networks.
\section{Conclusion}
In this work, we answer fundamental questions regarding the performance improvement of multi-connectivity over single-connectivity. For doing so, we derive the exact and asymptotic outage probabilities for optimal and suboptimal receiver algorithms, namely joint decoding, selection combining, and maximal-ratio combining. We evaluate and show the tremendous transmit power reduction of MCo over SCo, by analytically deriving the corresponding SNR gain. Our results reveal that the SNR gain increases with the number of links and the spectral efficiency, while decreasing with the outage probability. \\ \indent
In addition, we compare the optimal combining algorithm, i.e, joint decoding, with the suboptimal combining algorithms, i.e., selection combining and maximal-ratio combining. Again, we quantify the performance gain in terms of the SNR reduction. Our results reveal that JD becomes more advantageous in terms of the SNR gain as the spectral efficiency and the number of links increases.\\ \indent
Finally, we have applied our analytical framework to real cellular networks. Based on the measurement data recorded in a field trial, we have evaluated the achievable performance improvement by the use of multi-connectivity. The measurement data documents that multiple relatively high average SNR values frequently occur, in which case multi-connectivity proves particularly beneficial.
\section*{Acknowledgments}
The authors wish to acknowledge the helpful discussions with David {\"O}hmann and Nick Schwarzenberg.
\appendices
%%%%%%%%%%%%%%%%%%%%% App A%%%%%%%%%%%%%%%%%%%%%%%%
\section{Asymptotic Joint Decoding Outage Probability}
\label{app:ExactOutage}
The asymptotic outage probability can be obtained as
{\allowdisplaybreaks\begin{align}
& P_{\text{JD},N}^{\text{out}} = \int_{\gamma_1=0}^{2^{R_\text{c} }-1} \int_{\gamma_2=0}^{2^{ R_\text{c}  - \phi(\gamma_1)}-1} ... \nonumber \\
& \quad \int_{\gamma_N=0}^{2^{ R_\text{c}  - \phi(\gamma_1) - \phi(\gamma_2) - ... -  \phi(\gamma_{N-1})}-1}  \frac{1}{\bar{\Gamma}_1} \exp\Big(-\frac{\gamma_1}{\bar{\Gamma}_1}\Big) \nonumber \\ 
\label{eq:JD_Outage_app2}
& \quad \times \frac{1}{\bar{\Gamma}_2} \exp\Big(-\frac{\gamma_2}{\bar{\Gamma}_2}\Big) ... \frac{1}{\bar{\Gamma}_N} \exp\Big(-\frac{\gamma_N}{\bar{\Gamma}_N}\Big) \text{d} \gamma_N ... \text{d} \gamma_2 \text{d} \gamma_1 \\
& \approx \int_{\gamma_1=0}^{2^{R_\text{c} }-1} \int_{\gamma_2=0}^{2^{ R_\text{c}  - \phi(\gamma_1)}-1} ... \int_{\gamma_N=0}^{2^{ R_\text{c}  - \phi(\gamma_1) - \phi(\gamma_2) - ... -  \phi(\gamma_{N-1})}-1} \nonumber \\ 
\label{eq:JD_Outage_app3}
& \quad \frac{1}{\bar{\Gamma}_1} \Big(1- \frac{\gamma_1}{\bar{\Gamma}_1}\Big) \frac{1}{\bar{\Gamma}_2} \Big(1- \frac{\gamma_2}{\bar{\Gamma}_2}\Big) ... \frac{1}{\bar{\Gamma}_N} \Big(1- \frac{\gamma_N}{\bar{\Gamma}_N}\Big) \text{d} \gamma_N ... \text{d} \gamma_2 \text{d} \gamma_1 \\
& \approx  \frac{1}{\bar{\Gamma}_1 \bar{\Gamma}_2 ... \bar{\Gamma}_N} \int_{\gamma_1=0}^{2^{R_\text{c} }-1} \int_{\gamma_2=0}^{2^{ R_\text{c}  - \phi(\gamma_1)}-1} ... \nonumber \\
\label{eq:JD_Outage_app4}
& \quad \int_{\gamma_N=0}^{2^{ R_\text{c}  - \phi(\gamma_1) - \phi(\gamma_2) - ... -  \phi(\gamma_{N-1})}-1}  \text{d} \gamma_N ... \text{d} \gamma_2 \text{d} \gamma_1 \\
\label{eq:JD_Outage_app5}
&= \frac{A_N(R_\text{c})}{\bar{\Gamma}_1 \bar{\Gamma}_2 ... \bar{\Gamma}_N}.
\end{align}}\noindent
The steps are justified as follows: (\ref{eq:JD_Outage_app2}) is the substitution of the pdf $f(\gamma_i)$ given in (\ref{eq:RFpdf}) into (\ref{eq:JD_Outage_4}); (\ref{eq:JD_Outage_app3}) MacLaurin series for exponential function  $\exp(-x_i) \approx 1- x_i$ for $x_i\rightarrow 0$, (\ref{eq:JD_Outage_app4}) expanding the resulting product as $\prod\nolimits_{i}(1-x_i) \approx 1$ for $x_i\rightarrow 0$; (\ref{eq:JD_Outage_app5}) is proven in Lemma~\ref{th:highSNR} and the assumption that the received SNRs are independently distributed, thus we can interchange the integral bounds. 
\begin{lemma}\label{th:highSNR}
 For any $N \in \mathbb{N}\backslash \{1\}$,
	\begin{align}
	A_N(x) & = \int_{\gamma_N=0}^{2^{x}-1}  \int_{\gamma_{N-1}=0}^{2^{x-\phi(\gamma_N)}-1} ...  \int_{\gamma_{1}=0}^{2^{x-\phi(\gamma_N)-...-\phi(\gamma_2)}-1} \nonumber \\
\label{eq:theorem_3_1}
	&  \quad \text{d} \gamma_1 ... \text{d} \gamma_{N-1} \text{d} \gamma_N \\
\label{eq:theorem_3_2}
		 & =  (-1)^N\left(1 - 2^x \cdot e_N  \left(-x \ln(2)\right) \right).
\end{align}

Here, $e_N(y)=\sum_{n=0}^{N-1}\frac{y^n}{n!}$ is the exponential sum function.
\end{lemma}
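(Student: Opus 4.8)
The plan is to prove the identity by induction on $N$, after first recasting the iterated integral as the volume of a simplex-type region. Writing $a=\ln 2$ and recalling $\phi(\gamma)=\ld(1+\gamma)$, I first observe that the nested limits in \eqref{eq:theorem_3_1} are exactly equivalent to the single symmetric constraint $\sum_{i=1}^N \phi(\gamma_i)\le x$ on the positive orthant $\gamma_i\ge 0$: the innermost bound $\gamma_1\le 2^{\,x-\phi(\gamma_N)-\cdots-\phi(\gamma_2)}-1$ is precisely $\phi(\gamma_1)+\cdots+\phi(\gamma_N)\le x$, and since $\phi\ge 0$ on $\gamma\ge 0$, the outer bounds merely keep the successive upper limits nonnegative. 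Hence $A_N(x)$ is the Euclidean volume of $\{\gamma\in\mathbb{R}_{\ge 0}^N : \sum_i \ld(1+\gamma_i)\le x\}$, a description that is manifestly permutation-symmetric and therefore independent of the order of integration.

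Next I would apply the change of variables $u_i=\phi(\gamma_i)=\ld(1+\gamma_i)$, so that $\gamma_i=2^{u_i}-1$ and $\text{d}\gamma_i = a\,2^{u_i}\,\text{d}u_i$. This maps the region onto the standard scaled simplex and yields the clean form
\begin{align}
A_N(x) = a^N \int_{\substack{u_i\ge 0\\ \sum_i u_i\le x}} e^{\,a\sum_i u_i}\,\text{d}u =: a^N J_N(x).
\end{align}
Peeling off the last variable gives the convolution recursion $J_N(x)=\int_0^x e^{au}J_{N-1}(x-u)\,\text{d}u$ with $J_1(x)=(e^{ax}-1)/a$; equivalently, writing $e^{-ax}J_N(x)=\int_0^x e^{-au}J_{N-1}(u)\,\text{d}u$ and differentiating, $J_N$ satisfies the first-order linear recursion $J_N'(x)-aJ_N(x)=J_{N-1}(x)$ with $J_N(0)=0$. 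The virtue of this substitution is that the Jacobian factor $a\,2^{u_i}$ reassembles into exactly the integrand $e^{a\sum_i u_i}$ needed for the recursion to close.

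The induction then proceeds by substituting the hypothesis $J_{N-1}(x)=\frac{(-1)^{N-1}}{a^{N-1}}\bigl(1-e^{ax}e_{N-1}(-ax)\bigr)$ into the recursion, splitting the integral into the elementary piece $\int_0^x e^{au}\,\text{d}u=(e^{ax}-1)/a$ and the terms $e^{ax}\int_0^x \frac{(-a(x-u))^n}{n!}\,\text{d}u$, each of which integrates (via $v=x-u$) to a monomial $\frac{(-a)^n x^{n+1}}{(n+1)!}$. After collecting terms I expect to recover $A_N(x)=a^N J_N(x)=(-1)^N\bigl(1-2^x e_N(-x\ln 2)\bigr)$, using $2^x=e^{ax}$ and $-x\ln 2=-ax$. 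The base case $N=1$ is immediate since $e_1\equiv 1$ gives $A_1(x)=2^x-1$; if one prefers to respect the stated range $N\ge 2$, a direct check of $N=2$ serves equally well.

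I expect the only real obstacle to be bookkeeping rather than anything deep. After integrating the sum term-by-term one obtains $\sum_{n=0}^{N-2}\frac{(-a)^n x^{n+1}}{(n+1)!}$, and the crux is to reindex $m=n+1$ and recognize that the elementary piece supplies exactly the missing $m=0$ term, so that the partial sum closes up into $e_N(-ax)=\sum_{m=0}^{N-1}\frac{(-ax)^m}{m!}$ while the residual constant produces the leading factor $(-1)^N$. Care with the alternating signs and with the shift of the summation index is where a slip is most likely, so I would verify the algebra against the $N=2$ case computed directly before trusting the general step.
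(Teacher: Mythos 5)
Your proposal is correct, and I verified the inductive step closes as you expect: with $a=\ln 2$, substituting $J_{N-1}(x-u)=\frac{(-1)^{N-1}}{a^{N-1}}\bigl(1-e^{a(x-u)}e_{N-1}(-a(x-u))\bigr)$ into $J_N(x)=\int_0^x e^{au}J_{N-1}(x-u)\,\text{d}u$ gives monomial integrals $\frac{(-a)^n x^{n+1}}{(n+1)!}$, and after the reindexing $m=n+1$ the elementary piece $(e^{ax}-1)/a$ supplies the missing $m=0$ term, yielding $J_N(x)=\frac{(-1)^N}{a^N}\bigl(1-e^{ax}e_N(-ax)\bigr)$, i.e., exactly \eqref{eq:theorem_3_2}. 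Your route differs from the paper's in a substantive way. The paper also inducts on $N$, but it works directly in the SNR variables: it identifies the inner $K$-fold integral as $A_K(x-\phi(\gamma_{K+1}))$, expands $\bigl(x-\phi(\gamma_{K+1})\bigr)^n$ by the binomial theorem, integrates terms of the form $\frac{2^x}{1+\gamma_{K+1}}\bigl(\ln(1+\gamma_{K+1})\bigr)^k$, and then needs the combinatorial identity $\sum_{k=0}^{n}(-1)^k\binom{n}{k}\frac{1}{k+1}=\frac{1}{n+1}$ to collapse the double sum. Your preliminary change of variables $u_i=\phi(\gamma_i)$ converts the integrand into the pure exponential $e^{a\sum_i u_i}$ over a simplex, so the inductive step reduces to integrating monomials and shifting an index; no binomial expansion and no alternating-sum identity are needed. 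What the substitution buys is a cleaner induction and a structural explanation (volume of a log-transformed simplex, convolution/ODE recursion $J_N'-aJ_N=J_{N-1}$) of why the exponential sum function appears; what the paper's direct computation buys is that it stays in the same coordinates as the outage integral \eqref{eq:JD_Outage_4}, so no justification of the region symmetrization or Fubini step is required. Both arguments are complete and correct; yours is arguably the more transparent one.
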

\begin{proof}
\emph{Base case:} If $N=2$, then (\ref{eq:theorem_3_1}) is
{\allowdisplaybreaks	\begin{align}
	A_2(x) & =  \int_{\gamma_2=0}^{2^{x}-1}  \int_{\gamma_{1}=0}^{2^{x-\phi(\gamma_2)}-1} 1 \text{d} \gamma_1 \text{d} \gamma_2 \nonumber \\
	& = \int_{\gamma_2=0}^{2^{x}-1} \left[\frac{2^{x}}{1+\gamma_2}-1\right] \text{d} \gamma_2 \nonumber \\
	& = 2^x \left(x \cdot \ln(2)-1\right)+1,
\end{align}	}\noindent
which is (\ref{eq:theorem_3_2}) for $N=2$. So, the theorem holds for $N=2$. \\ \indent
\emph{Inductive hypothesis:} Suppose the theorem holds for all values of $N$ up to some $K \geq 2$. \\ \indent
\emph{Inductive step:} Let $N=K+1$, then (\ref{eq:theorem_3_1}) is
{\allowdisplaybreaks\begin{align}
& A_{K+1}(x) =  \int_{\gamma_{K+1}=0}^{2^{x}-1} \nonumber \\
& \quad \underbrace{\int_{\gamma_{K}=0}^{2^{x-\phi(\gamma_{K+1})}-1} ...  \int_{\gamma_{1}=0}^{2^{x-\phi(\gamma_{K+1})-...-\phi(\gamma_2)}-1} 1 \text{d} \gamma_1 ... \text{d} \gamma_{K}}_{A_K(x-\phi(\gamma_{K+1}))} \nonumber \\
\label{eq:proof_3_1}
& \quad \text{d} \gamma_{K+1} \\
& =  \int_{\gamma_{K+1}=0}^{2^{x}-1} \bigg[(-1)^K+\frac{2^{x}}{1+\gamma_{K+1}} \sum\nolimits_{n=0}^{K-1} (-1)^{K+n+1} \nonumber \\
\label{eq:proof_3_2}
& \quad \times \frac{1}{n!} \left(x-\phi(\gamma_{K+1})\right)^n \left( \ln(2) \right)^n\bigg] \text{d} \gamma_{K+1} \\
& =  \int_{\gamma_{K+1}=0}^{2^{x}-1} \bigg[(-1)^K+\frac{2^{x}}{1+\gamma_{K+1}} \sum\nolimits_{n=0}^{K-1} (-1)^{K+n+1} \frac{1}{n!}\nonumber \\
\label{eq:proof_3_3}
& \quad \times  \left( \ln(2) \right)^n   \sum\nolimits_{k=0}^{n} (-1)^k \binom{n}{k}  x^{n-k} \phi(\gamma_{K+1})^k \bigg] \text{d} \gamma_{K+1} \\
& =  (-1)^K \gamma_{K+1} +2^{x}\sum\nolimits_{n=0}^{K-1} (-1)^{K+n+1} \frac{1}{n!} \left( \ln(2) \right)^n  \nonumber \\
\label{eq:proof_3_4}
& \quad \times \sum\nolimits_{k=0}^{n} (-1)^k \binom{n}{k} x^{n-k} \frac{\left(\ln(1+\gamma_{K+1})\right)^{k+1}}{(k+1) \left(\ln(2)\right)^k} \bigg|_{\gamma_{K+1}=0}^{2^x-1} \\
& =  (-1)^K\bigg( 2^x -1 -2^{x}\sum\nolimits_{n=0}^{K-1} (-1)^{n} \frac{1}{n!} x^{n+1} \left(\ln(2) \right)^{n+1} \nonumber \\
\label{eq:proof_3_5}
& \quad \times \underbrace{\sum\nolimits_{k=0}^{n} (-1)^k \binom{n}{k} \frac{1}{k+1}}_{\sfrac{1}{(n+1)}} \bigg) \\
\label{eq:proof_3_6}
& = (-1)^{K+1}\left(1 - 2^{x} \cdot e_{K+1}  \left(-x \ln(2)\right) \right).
\end{align}}\noindent
The steps are justified as follows: (\ref{eq:proof_3_2}) is our inductive hypothesis; for (\ref{eq:proof_3_3}) we have used the binomial formula; (\ref{eq:proof_3_6}) we have used the following
\begin{align}
&	\sum\nolimits_{k=0}^{n} (-1)^k \binom{n}{k} \frac{1}{k+1} = \frac{1}{n+1} \sum\nolimits_{k=0}^{n} (-1)^k \binom{n+1}{k+1} \\
&	= \frac{-1}{n+1} \bigg[\underbrace{\sum\nolimits_{k=1}^{n+1} (-1)^k \binom{n+1}{k} + \binom{n+1}{0}}_{\sum\nolimits_{k=0}^{n+1} (-1)^k \binom{n+1}{k}=(1-1)^{n+1}=0} - \binom{n+1}{0}\bigg] \\
& =\frac{1}{n+1}
\end{align}
and carried out some algebraic manipulations. Equation (\ref{eq:proof_3_6}) corresponds to (\ref{eq:theorem_3_2}) for $N=K+1$. So, the theorem holds for $N=K+1$. Hence, by the principle of mathematical induction, the theorem holds for all $N \in \mathbb{N} \backslash \{1\}$. 
\end{proof}
\section{Lemma~\ref{th:UpperBound}}
\label{app:UpperBound}
\begin{lemma}\label{th:UpperBound}
	For any $N\in \mathbb{N}$, $R_\text{c}>0$, and RVs $\Gamma_{i}\in \mathds{R}_+, \forall i \in \mathcal{N}=\{1,...,N\}$,
	\begin{align}
	\label{eq:UpperBound}
	\emph{Pr}\left[0\leq \sum\nolimits_{i=1}^{N} \Gamma_i \leq A_1(R_\text{c})\right] \leq  \frac{1}{N!}\frac{\left(A_1(R_\text{c})\right)^N}{\prod_{i=1}^{N}\bar{\Gamma}_i} ,
	\end{align}
	where $A_1(R_\text{c})=2^{R_\text{c}}-1$ and pdf $f_{\Gamma_{i}}(\gamma_i)=\sfrac{1}{\bar{\Gamma}_i}\exp\left(-\sfrac{\gamma_i}{\bar{\Gamma}_{i}}\right),$ $\forall i \in \mathcal{N}$.
\end{lemma}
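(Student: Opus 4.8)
The plan is to exploit the fact that the $\Gamma_i$ are independent exponential random variables, so that the left-hand side is simply a multiple integral of the joint density over a simplex, and then to bound that integral by replacing each exponential factor with $1$. Concretely, I would first write the probability as
\begin{align*}
\text{Pr}\left[0\leq \sum\nolimits_{i=1}^{N} \Gamma_i \leq A_1(R_\text{c})\right]
= \int_{\mathcal{D}} \prod_{i=1}^{N} \frac{1}{\bar{\Gamma}_i} \exp\left(-\frac{\gamma_i}{\bar{\Gamma}_i}\right) \, \text{d}\gamma_1 \cdots \text{d}\gamma_N,
\end{align*}
where $\mathcal{D} = \{(\gamma_1,\dots,\gamma_N) : \gamma_i \geq 0,\ \sum_{i=1}^{N}\gamma_i \leq A_1(R_\text{c})\}$ is the standard $N$-simplex of side $A_1(R_\text{c})$. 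The independence assumption factorizes the joint density, which is exactly what makes this integral tractable.

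Next I would pull the constant prefactor $\prod_{i=1}^N 1/\bar{\Gamma}_i$ out of the integral and use the elementary bound $\exp(-\gamma_i/\bar{\Gamma}_i) \leq 1$, valid since $\gamma_i \geq 0$ and $\bar{\Gamma}_i > 0$ throughout $\mathcal{D}$. This replaces the integrand by the constant $1$ and yields
\begin{align*}
\text{Pr}\left[0\leq \sum\nolimits_{i=1}^{N} \Gamma_i \leq A_1(R_\text{c})\right]
\leq \frac{1}{\prod_{i=1}^{N}\bar{\Gamma}_i} \int_{\mathcal{D}} \text{d}\gamma_1 \cdots \text{d}\gamma_N.
\end{align*}
The remaining integral is purely geometric: it is the Lebesgue volume of $\mathcal{D}$, which is the well-known $N$-simplex volume $\bigl(A_1(R_\text{c})\bigr)^N/N!$. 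Substituting this value gives precisely the claimed bound in (\ref{eq:UpperBound}).

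I expect no genuine obstacle here; the only step requiring a small argument is evaluating the simplex volume, which I would either cite as standard or establish by a one-line induction on $N$ (integrating out $\gamma_N$ reduces the side to $A_1(R_\text{c}) - \gamma_N$ and rescales the $(N-1)$-dimensional volume, producing the factor $1/N!$ via $\int_0^{A_1} (A_1-\gamma_N)^{N-1}/(N-1)!\,\text{d}\gamma_N$). It is worth noting that the bound is essentially the high-SNR leading term of the exact MRC outage probability, so this lemma is exactly what justifies using (\ref{eq:MRC_Outage_4}) with $\bar{\Gamma}^N$ replaced by $\prod_{i=1}^N \bar{\Gamma}_i$ as the upper bound (\ref{eq:MRC_Outage_10}) in the non-identical-SNR case.
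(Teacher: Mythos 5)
Your proof is correct and follows essentially the same route as the paper: both arguments bound each exponential factor by $1$ (valid since $\gamma_i \geq 0$) and thereby reduce the probability to the volume of a simplex, which yields the factor $\left(A_1(R_\text{c})\right)^N/\left(N!\prod_{i=1}^{N}\bar{\Gamma}_i\right)$. The only cosmetic difference is that the paper first normalizes via $x_i = \gamma_i/\bar{\Gamma}_i$ and computes the volume of the weighted simplex $\left\{x_i \geq 0,\ \sum_{i=1}^{N}\bar{\Gamma}_i x_i \leq A_1(R_\text{c})\right\}$, whereas you keep the original variables, pull the Jacobian-like factor $\prod_{i=1}^{N} 1/\bar{\Gamma}_i$ out of the density, and use the standard simplex volume; the two computations are related by exactly that change of variables.
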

\begin{proof} 
	Let us define a $N$-fold simplex as
	\begin{align}
		\mathcal{S}_N & := \bigg\{(x_1,...,x_N)\in \mathds{R}^N: x_i\geq 0, \nonumber \\
\label{eq:lemma2_1}	
		& \qquad \sum\nolimits_{i=1}^{N}  \bar{\Gamma}_i x_i \leq A_1(R_\text{c}) \bigg\}.
	\end{align}
	The geometric simplex volume is \cite{ellis1976}
	\begin{align}
\label{eq:lemma2_2}	
		\text{Vol}\left(	\mathcal{S}_N \right):=  \idotsint\limits_{\mathcal{S}_N}	\text{d} x_N ... \text{d} x_1= \frac{1}{N!}\frac{\left(A_1(R_\text{c})\right)^N}{\prod_{i=1}^{N}\bar{\Gamma}_i}.
	\end{align}
Now, the left hand side of Lemma~\ref{th:UpperBound} can be rewritten as
	\begin{align}
	&	\text{Pr}\left[0\leq \sum\nolimits_{i=1}^{N} \Gamma_i \leq A_1(R_\text{c})\right] \nonumber \\
\label{eq:lemma2_3}
	& = 		\text{Pr}\left[0\leq \sum\nolimits_{i=1}^{N} \bar{\Gamma}_i X_i \leq A_1(R_\text{c})\right] \\
\label{eq:lemma2_4}
		& = \idotsint\limits_{\mathcal{S}_N}	\prod\nolimits_{i=1}^{N}\exp(-x_i) 	\text{d} x_N ... \text{d} x_1	\\
\label{eq:lemma2_5}	
	& \leq \idotsint\limits_{\mathcal{S}_N}	\text{d} x_N ... \text{d} x_1.
\end{align}
The steps can be justified as follows: in (\ref{eq:lemma2_3}) and (\ref{eq:lemma2_4}) we introduce the RV $X_i=\sfrac{\Gamma_{i}}{\bar{\Gamma}_i}$, with pdf $f_{X_{i}}(x_i)=\exp\left(-x_i\right), \forall i \in \mathcal{N}$; in (\ref{eq:lemma2_5}) the exponential function can be upper bounded by $\exp(-x)\leq 1$ for $x\geq 0$; (\ref{eq:lemma2_5}) is the geometric simplex volume in (\ref{eq:lemma2_2}).	
\end{proof}
\section{Inverse Function}
\label{app:InverseFunction}
For $R_\text{c}\gg 1$, the term of the exponential sum function with the highest order in (\ref{eq:Constant_JD}) is dominant. Thus, we can approximate $A_N(R_\text{c}=x)\approx y=f(x)$ to
\begin{align}
\label{eq:Lambert_1}
y = 2^{x} x^{(N-1)}\frac{(\ln(2))^{N-1}}{(N-1)!}=2^x x^a b.
\end{align}
We can reformulate (\ref{eq:Lambert_1}) as follows
\begin{align}
	& 2^x x^a  =	\frac{y}{b}, \quad \Leftrightarrow \quad \frac{x}{a}	2^{\frac{x}{a}} = \frac{1}{a}\sqrt[a]{\frac{y}{b}}, \nonumber \\ 
	\Leftrightarrow \quad & \frac{x\ln(2)}{a} \exp\left(\frac{x\ln(2)}{a}\right)   = \frac{\ln(2)}{a}\sqrt[a]{\frac{y}{b}},\\	
	 \Leftrightarrow \quad & z\exp(z) = \frac{\ln(2)}{a}\sqrt[a]{\frac{y}{b}}, \nonumber \\ 
	 \label{eq:Lambert_6}
	 \Leftrightarrow	\quad & x  =\frac{az}{\ln(2)}=\frac{a}{\ln(2)} W\left(\frac{\ln(2) }{a}\sqrt[a]{\frac{y}{b}}\right),
\end{align}
where $W(\cdot)$ is the Lambert $W$ function, i.e., $z=g^{-1}(z\exp(z))=W(z\exp(z))$ is the inverse function of $g(z)=z\exp(z)$, for $z\geq -1$. For $z\geq e$, the Lambert $W$ function is bounded by \cite[Theorem 2.7]{hoorfar2007}
\begin{align}
\label{eq:Lambert_7}
	W(z)=\ln(z)-\ln(\ln(z)) +\Theta\left(\frac{\ln(\ln(z))}{\ln(z)}\right).
\end{align}
Finally, with (\ref{eq:Lambert_6}) and (\ref{eq:Lambert_7}) we can give an approximation of the inverse function as
\begin{align}
	A_N^{-1}(A_N) & =R_\text{c} \approx \frac{N-1}{\ln(2)} \left[\ln(\zeta)-\ln(\ln(\zeta))\right], \\
	\intertext{where }
	\zeta &=\frac{\sqrt[N-1]{ (N-1)! A_N}}{N-1}.
\end{align}
\section{Lemma~\ref{th:A_N}}
\label{app:Lemma6}
\begin{lemma}\label{th:A_N}
 For any $N\in \mathbb{N}\backslash\{1\}$ and $x>0$,
 \begin{align}
 \label{eq:lemma6_1}
 	X_N(x)=\left(A_1(x)\right)^N > N! \cdot  A_N(x)=Y_N(x),
 \end{align}
 where $A_N(x)$ is given in Lemma~\ref{th:highSNR}.
\end{lemma}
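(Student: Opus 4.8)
The plan is to recast both sides of \eqref{eq:lemma6_1} as weighted volumes under a common change of variables, and then compare them geometrically. Using the integral form of $A_N$ in Lemma~\ref{th:highSNR} together with the substitution $u_i=\phi(\gamma_i)=\ld(1+\gamma_i)$ (so that $\gamma_i=2^{u_i}-1$ and $\mathrm{d}\gamma_i=2^{u_i}\ln(2)\,\mathrm{d}u_i$), the nested integral defining $A_N(x)$ turns into a weighted integral over the standard simplex. Writing $T=\sum_{i=1}^N u_i$, I would first establish
\begin{align}
\label{eq:plan_AN}
A_N(x) &= (\ln 2)^N \int_{\Delta} 2^{T}\,\mathrm{d}u, \qquad \Delta=\big\{u\in\mathbb{R}_+^N : T\le x\big\},
\end{align}
and, applying the same substitution to $A_1(x)=\ln(2)\int_0^x 2^u\,\mathrm{d}u$,
\begin{align}
\label{eq:plan_A1N}
\big(A_1(x)\big)^N &= (\ln 2)^N \int_{[0,x]^N} 2^{T}\,\mathrm{d}u.
\end{align}
After dividing by $(\ln 2)^N$, the claim \eqref{eq:lemma6_1} becomes equivalent to $\int_{[0,x]^N}2^{T}\,\mathrm{d}u > N!\int_{\Delta}2^{T}\,\mathrm{d}u$.

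Next I would exploit the symmetry of the integrand $2^T$. The cube $[0,x]^N$ decomposes, up to a null set, into the $N!$ order simplices $S_\sigma=\{u\in[0,x]^N : u_{\sigma(1)}\le\cdots\le u_{\sigma(N)}\}$, and since $2^T$ is invariant under permutations of the coordinates, each piece contributes equally, giving $\int_{[0,x]^N}2^T\,\mathrm{d}u = N!\int_{S}2^T\,\mathrm{d}u$ with $S=\{0\le u_1\le\cdots\le u_N\le x\}$. The target inequality then reduces to the clean comparison
\begin{align}
\label{eq:plan_reduce}
\int_{S}2^{T}\,\mathrm{d}u &> \int_{\Delta}2^{T}\,\mathrm{d}u
\end{align}
between two regions of \emph{equal} volume $x^N/N!$. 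Intuitively this holds because $T\le x$ everywhere on $\Delta$, whereas $S$ pushes a positive fraction of its (equal) volume to values $T>x$, where the increasing weight $2^T$ is larger.

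To make \eqref{eq:plan_reduce} rigorous I would use a layer-cake identity: for the increasing weight $2^T$ and any region $R$,
\begin{align}
\label{eq:plan_layer}
\int_{R}2^{T}\,\mathrm{d}u &= \mathrm{Vol}(R) + \ln(2)\int_0^\infty 2^{\tau}\,\mathrm{Vol}\big(R\cap\{T>\tau\}\big)\,\mathrm{d}\tau.
\end{align}
Since $\mathrm{Vol}(S)=\mathrm{Vol}(\Delta)$, subtracting the two instances of \eqref{eq:plan_layer} cancels the leading terms and leaves only a tail-volume difference, so \eqref{eq:plan_reduce} follows once I show $\mathrm{Vol}(S\cap\{T\le\tau\})\le\mathrm{Vol}(\Delta\cap\{T\le\tau\})$ for every $\tau\ge0$, strictly on a set of positive measure. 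For $0<\tau\le x$ the constraint $u_N\le x$ is inactive on $S\cap\{T\le\tau\}$ (because $u_N\le T\le\tau\le x$), so this set is exactly the ordered sub-simplex of $\{u\in\mathbb{R}_+^N:T\le\tau\}$, of volume $\tau^N/(N!)^2$, which is strictly below $\mathrm{Vol}(\Delta\cap\{T\le\tau\})=\tau^N/N!$ since $N!>1$ for $N\ge2$; for $\tau>x$ the left side is trivially bounded by $\mathrm{Vol}(S)=x^N/N!=\mathrm{Vol}(\Delta\cap\{T\le\tau\})$. Combining the two regimes yields the required tail-volume comparison, and hence \eqref{eq:lemma6_1}.

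I expect the main obstacle to be this final volume comparison — specifically, verifying that the ordered structure of $S\cap\{T\le\tau\}$ really does render $u_N\le x$ redundant for $\tau\le x$, which is what produces the decisive factor $1/N!$ and therefore the \emph{strict} inequality. Care is essential here: the bound is asymptotically tight as $x\to0^+$ (both sides behave like $(\ln 2)^N x^N$), so any argument based on crude term-by-term estimates loses the needed factor, and it is precisely the higher-order strictness captured by the tail weight $2^\tau$ in \eqref{eq:plan_layer} that saves the inequality. An induction on $N$ via the recursion $A_N(x)=\ln(2)\int_0^x A_{N-1}(x-u)\,2^u\,\mathrm{d}u$ is conceivable but appears to demand the same delicate accounting.
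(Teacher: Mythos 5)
Your proof is correct, and it takes a genuinely different route from the paper's. Your steps all check out: the substitution $u_i=\phi(\gamma_i)$ turns the nested integral of Lemma~\ref{th:highSNR} into $(\ln 2)^N\int_\Delta 2^T\,\mathrm{d}u$ and $(A_1(x))^N$ into $(\ln 2)^N\int_{[0,x]^N}2^T\,\mathrm{d}u$; the cube-into-order-simplices decomposition, the layer-cake identity, and the tail-volume comparison are all sound, including the decisive observation that on $S\cap\{T\le\tau\}$ with $\tau\le x$ the constraint $u_N\le x$ is redundant (since $u_N\le T\le\tau\le x$), which yields $\tau^N/(N!)^2<\tau^N/N!$ and hence strict inequality on a set of positive measure. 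The paper instead argues by elementary calculus on the closed form: since $X_N(0)=Y_N(0)=0$, it compares slopes, and the derivative of the alternating exponential sum telescopes, giving
\begin{align*}
\frac{\mathrm{d}}{\mathrm{d}x}X_N(x)&=N\ln(2)\,2^x\left(2^x-1\right)^{N-1},\\
\frac{\mathrm{d}}{\mathrm{d}x}Y_N(x)&=N\ln(2)\,2^x\left(x\ln(2)\right)^{N-1},
\end{align*}
so the lemma reduces to $2^x-1>x\ln(2)$ for $x>0$, which follows from one further slope comparison, $\ln(2)2^x>\ln(2)$. The paper's proof is a few lines and needs nothing beyond differentiation; it also copes automatically with the asymptotic tightness at $x\to 0^+$ that you flagged, because the derivative gap likewise vanishes only at $x=0$. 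Your argument is longer but more structural: it exposes the factor $N!$ as the ratio between the ordered simplex of the cube and the full simplex, localizes the entire slack in the portion of the cube where $T>x$, and would generalize verbatim to any strictly increasing weight $w(T)$ in place of $2^T$, whereas the paper's telescoping step is specific to the exponential sum.
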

\begin{proof} 
For $x=0$ we have $X_N(0)=0$ and $Y_N(0)=0$ in (\ref{eq:lemma6_1}). Next, show that the slope of $X_N(x)$ is larger than the slope of $Y_N(x)$ for $x>0$ and thus $X_N(x)>Y_N(x)$, $x\geq0, \forall N$. 
{\allowdisplaybreaks\begin{align}
	& \frac{\text{d}}{\text{d}x} X_N(x)  = \frac{\text{d}}{\text{d}x} \left[\left(2^{x}-1\right)^N\right] \nonumber \\
\label{eq:lemma6_2}
	 & = N 2^{x} \ln (2)\left(2^{x}-1\right)^{N-1},  \\
\label{eq:lemma6_3}	
& \frac{\text{d}}{\text{d}x} Y_N(x) = \frac{\text{d}}{\text{d}x} \Big[ N! (-1)^N\Big(1-2^{x} \sum\nolimits_{n=0}^{N-1}\frac{1}{n!} \left(-x\ln(2)\right)^n\Big)\Big] \\
	& = N! \ln(2) 2^{x}(-1)^{N+1}\Big(\sum\nolimits_{n=0}^{N-1}(-1)^{n} \frac{1}{n!} \left( x\ln(2) \right)^n \nonumber \\ 
\label{eq:lemma6_4}
	& \quad + \underbrace{\sum\nolimits_{n=1}^{N-1}(-1)^n \frac{1}{(n-1)!} \left( x\ln(2) \right)^{n-1}}_{-\sum\nolimits_{n=0}^{N-2}(-1)^n \frac{1}{n!} \left( x\ln(2) \right)^{n}} \Big) \\
\label{eq:lemma6_5}
	& = N! \ln(2) 2^{x}(-1)^{2N} \frac{1}{(N-1)!} \left( x \ln (2)\right)^{N-1} \\
	& = N  2^{x} \ln(2) \left( x \ln (2)\right)^{N-1} .
	\end{align}}\noindent
We have to show that
\begin{align}
\label{eq:lemma6_6}
	\left(2^{x}-1\right)^{N-1}  & > \left( x \ln (2)\right)^{N-1} \quad \text{for } x>0.
\intertext{Since both sides in (\ref{eq:lemma6_6}) are equal for $x=0$ and have the same exponent, it is sufficient to show }
\label{eq:lemma6_7}
	\frac{\text{d}}{\text{d}x} \left(2^{x}-1\right) & > \frac{\text{d}}{\text{d}x} \left( x \ln (2)\right) \\
\label{eq:lemma6_8}
 \ln(2) 2^{x} & > \ln(2).
\end{align}
(\ref{eq:lemma6_8}) holds for $x > 0$. \end{proof}
\section{Derivation of Diversity-Multiplexing Tradeoff}
\label{app:Theorem4}
Based on the outage probability analysis considered before the diversity gains for JD, SC, and MRC are given by
	{\allowdisplaybreaks\begin{align}
		\label{eq:diversity_gain_JD_1}
		& d_\text{JD}(r)= - \lim\limits_{\bar{\Gamma} \rightarrow \infty} \frac{\ld \left(A_N(r,\bar{\Gamma})\right)- N \ld\left(\bar{\Gamma}\right)}{\ld \left(N \bar{\Gamma}\right)}\\
		& = - \lim\limits_{\bar{\Gamma} \rightarrow \infty} \bigg(r \frac{\ld\left(N \bar{\Gamma}\right)}{\ld\left(N \bar{\Gamma}\right)} \nonumber \\
		& \quad + \frac{\ld\big(\frac{1}{(N-1)!} \left(r\ld(N\bar{\Gamma})\right)^{(N-1)} (\ln(2))^{(N-1)}\big)}{\ld\left(N \bar{\Gamma}\right)} \nonumber \\
		\label{eq:diversity_gain_JD_2}
		& \quad - \frac{N \ld\left(\bar{\Gamma}\right)}{\ld\left(N \bar{\Gamma}\right)} \bigg) \\
		\label{eq:diversity_gain_JD_3}
		& = N-r,\\
		\label{eq:diversity_gain_SC_1}
		& d_\text{SC}(r)=  - \lim\limits_{\bar{\Gamma} \rightarrow \infty} \frac{N\ld \left(A_1(r,\bar{\Gamma})\right)- N \ld\left(\bar{\Gamma}\right)}{\ld \left(N \bar{\Gamma}\right)} \\
		\label{eq:diversity_gain_SC_2}		
		& =  -N  \lim\limits_{\bar{\Gamma}\rightarrow \infty} \bigg(r \frac{\ld\left(\bar{\Gamma}\right)}{\ld\left(N \bar{\Gamma}\right)}- \frac{\ld\left(\bar{\Gamma}\right)}{\ld\left(N \bar{\Gamma}\right)}\bigg) \\
		\label{eq:diversity_gain_SC_3}
		& = N \cdot (1-r), \quad \text{and} \\
		\label{eq:diversity_gain_MRC_1}
		& d_\text{MRC}(r)=  - \lim\limits_{\bar{\Gamma} \rightarrow \infty} \frac{N\ld \left(A_1(r,\bar{\Gamma})\right)- \ld\left(N!\right) -N \ld\left(\bar{\Gamma}\right)}{\ld \left(N \bar{\Gamma}\right)} \\
		\label{eq:diversity_gain_MRC_2}	
		& = -N  \lim\limits_{\bar{\Gamma}\rightarrow \infty} \bigg(r  \frac{\ld\left(\bar{\Gamma}\right)}{\ld\left(N \bar{\Gamma}\right)}- \frac{\ld\left(N!\right)}{N\ld\left(N \bar{\Gamma}\right)}   - \frac{\ld\left(\bar{\Gamma}\right)}{\ld\left(N \bar{\Gamma}\right)} \bigg) \\
		\label{eq:diversity_gain_MRC_3}
		& =  N \cdot (1-r),
		\end{align}}\noindent
respectively. The steps can be justified as follows: (\ref{eq:diversity_gain_JD_1}), (\ref{eq:diversity_gain_SC_1}), and (\ref{eq:diversity_gain_MRC_1}) are given by substituting (\ref{eq:JD_Outage_6}), (\ref{eq:SC_Outage_4}), (\ref{eq:MRC_Outage_4}) into (\ref{eq:diversity_gain}); in (\ref{eq:diversity_gain_JD_2}) we use the infinite SNR properties of $A_N\left(r,\bar{\Gamma}\right)$ in (\ref{eq:A_N}) and some algebraic manipulations; for infinite SNR the properties (\ref{eq:P_1}) and (\ref{eq:P_2}) hold which yields (\ref{eq:diversity_gain_JD_3});  in (\ref{eq:diversity_gain_SC_2}) and (\ref{eq:diversity_gain_MRC_2}) we use the infinite SNR property of $A_1\left(r,\bar{\Gamma}\right)$ in (\ref{eq:A_1}) and some algebraic manipulations; for infinite SNR the properties in (\ref{eq:P_1}) and (\ref{eq:P_2}) hold which yields (\ref{eq:diversity_gain_SC_3}) and (\ref{eq:diversity_gain_MRC_3}). \\ \indent
Substituting (\ref{eq:multiplexing_gain}) into (\ref{eq:Constant_JD}) the constants $A_N(r,\bar{\Gamma})$ and its special case $A_1(r,\bar{\Gamma})$ can be given depending on the multiplexing gain $r$ by 
	{\allowdisplaybreaks\begin{align}
		\label{eq:A_1}
		& \lim\limits_{\bar{\Gamma} \rightarrow \infty}A_1(r,\bar{\Gamma})=   \lim\limits_{\bar{\Gamma} \rightarrow \infty} \big( 2^{r \ld\left(\bar{\Gamma}\right)}-1 \big)= \lim\limits_{\bar{\Gamma} \rightarrow \infty} 2^{r \ld\left(\bar{\Gamma}\right)},\\
		& \lim\limits_{\bar{\Gamma} \rightarrow \infty} A_N(r,\bar{\Gamma})=  \lim\limits_{\bar{\Gamma} \rightarrow \infty} \Big((-1)^N+2^{r \ld\left(N\bar{\Gamma}\right)} \nonumber \\
		& \quad \sum\nolimits_{n=0}^{N-1} (-1)^{N+n+1} \frac{1}{n!} \left(r\ld(N\bar{\Gamma})\right)^n (\ln(2))^n \Big)\\
		\label{eq:A_N}
		& = \lim\limits_{\bar{\Gamma} \rightarrow \infty}   2^{r \ld\left(N\bar{\Gamma}\right)} \frac{1}{(N-1)!} \left(r\ld(N\bar{\Gamma})\right)^{(N-1)} (\ln(2))^{(N-1)},
		\end{align}}\noindent
where (\ref{eq:A_N}) can be justified with the infinite SNR properties in (\ref{eq:P_4}). Further properties for infinite SNR are:
	{\allowdisplaybreaks\begin{align}
		\label{eq:P_1}
		& \lim\limits_{\bar{\Gamma} \rightarrow \infty}  \frac{ \ld\left(\bar{\Gamma}\right)}{\ld\left(N \bar{\Gamma}\right)} =  1, \quad \lim\limits_{\bar{\Gamma} \rightarrow \infty}  \frac{\ld(N!)}{N\ld\left(N \bar{\Gamma}\right)} =  0	, \\
		\label{eq:P_2}
		& \lim\limits_{\bar{\Gamma} \rightarrow \infty}  \frac{(N-1)\ld\left( \ld\left(N\bar{\Gamma}\right)\right)}{\ld\left(N \bar{\Gamma}\right)} = 0, \\
		\label{eq:P_4}	
		& \lim\limits_{\bar{\Gamma} \rightarrow \infty}   \left(\ld(N \bar{\Gamma})\right)^{(N-1)} \gg  \lim\limits_{\bar{\Gamma} \rightarrow \infty}  \left( \ld(N \bar{\Gamma})\right)^{(N-n)}
		\end{align}}\noindent
	 for $n=2,...,N$.
% argument is your BibTeX string definitions and bibliography database(s)
\bibliographystyle{IEEEtran}
\bibliography{refs_WSN}

\end{document}